\newtheorem{defi}{\bf D\scriptsize EFINITION \normalsize}
\newtheorem{theorem}{\bf T\scriptsize HEOREM \normalsize}
\newtheorem{lm}{\bf L\scriptsize EMMA \normalsize}
\newtheorem{dk}{\bf C\scriptsize OROLLARY \normalsize}
\newtheorem{rem}{\bf R\scriptsize EMARK \normalsize}
\newtheorem{exa}{\bf E\scriptsize XAMPLE \normalsize}
\newtheorem{pro}{\bf P\scriptsize ROBLEM \normalsize}
\newtheorem{prop}{\bf P\scriptsize ROPOSITION \normalsize}
\newtheorem{no}{\bf N\scriptsize OTE \normalsize}
\newenvironment{remark}{\begin{rem}\rm}{\end{rem}}
\def\kopr{\hfill\raisebox{3pt}{\framebox{$\star$}}}
\newenvironment{example}{\begin{exa}\rm}{$\kopr$\end{exa}}
\newenvironment{definition}{\begin{defi}\rm}{\end{defi}}
\newenvironment{corollary}{\begin{dk}\it}{\end{dk}}
\newenvironment{proposition}{\begin{prop}\it}{\end{prop}}
\newcommand{\inte}[2]{\int \limits_{#1}^{#2}}
\newcommand{\nadsebou}[2]{\begin{array}{c} #1 \\ #2 \end{array}}
\newcommand{\zav}[1]{\left( #1 \right)}
\newcommand{\imag}{{\rm i}}
\newcommand{\abs}[1]{\left| #1 \right|}
\begin{document}
\title{Pedal coordinates, Dark Kepler and other force problems}
 \author{Petr Blaschke}
 \thanks{Supported by GA \v CR grant no. 201/12/G028}
\address{ Mathematical Institute, Silesian University in Opava, Na Rybnicku 1, 746 01 Opava, Czech Republic}

\email{Petr.Blaschke@math.slu.cz}
\begin{abstract} 
We will make the case that \textit{pedal coordinates} (instead of polar or Cartesian coordinates) are more natural settings in which to study force problems of classical mechanics in the plane. We will show that the trajectory of a test particle under the influence of central and Lorentz-like forces can be translated into pedal coordinates at once without the need of solving any differential equation. This will allow us to generalize Newton theorem of revolving orbits to include nonlocal transforms of curves. Finally, we apply developed methods to solve the ``dark Kepler problem'', i.e. central force problem where in addition to the central body, gravitational influences of dark matter and dark energy are assumed.
\end{abstract}
\maketitle
\section{Introduction}\label{Intro}
Since the time of Isaac Newton it is known that conic sections offers full description of trajectories for the so-called Kepler problem -- i.e. central force problem, where the force varies inversely as a square of the distance:
$
F\propto \frac{1}{r^2 }.
$

There is also another force problem for which the trajectories are fully described, Hook's law, where the force varies in proportion with the distance:
$
F\propto r.
$
(This law is usually used in the context of material science but can be also interpreted as a problem of celestial mechanics since such a force would produce gravity in a spherically symmetric, homogeneous bulk of dark matter by Newton shell theorem.)

Solutions of Hook's law are also conic sections 
but with the distinction that the origin is now in the center (instead of the focus) of the conic section.

But save for the law of gravity and Hook's law there seems to be no other force problem whose trajectories are fully described in known curves. Indeed, by Bertrand's theorem \cite{Bertrand} such a description would be problematic at best, since it states that \emph{no other} central force problem (i.e. when the force is only a function of distance from the center and points to it) has the property that all bounded curves are also closed.

But Newton himself (in \emph{Philosophi\ae\ Naturalis Principia Mathematica}) proves that this effort is not hopeless after all by showing that there is another force inversely proportional to the cube of the distance:
$
F\propto \frac{1}{r^3},
$
whose contribution can be understood purely geometrically. More precisely, he observed, that if a curve is given as a solution to the central force problem $F(r)$ adding additional force of the form
$
\frac{L^2}{m r^3}(1-k^2),
$
where $L$ is the particle's angular momentum and $m$ its mass, is equivalent to a making the curve's $k$-th harmonic.

The $k$-th harmonic of a curve is done simply by multiplying the angle of every point on a curve by a constant $k$. For example, the following picture shows the second harmonic ($k=2$) and the third subharmonic ($k=\frac{1}{3}$) of an ellipse, where the center of polar coordinates are in one focus:
\begin{center}
\definecolor{qqzzqq}{rgb}{0.,0.6,0.}
\definecolor{qqwuqq}{rgb}{0.,0.39215686274509803,0.}
\definecolor{uuuuuu}{rgb}{0.26666666666666666,0.26666666666666666,0.26666666666666666}
\definecolor{xdxdff}{rgb}{0.49019607843137253,0.49019607843137253,1.}
\definecolor{qqqqcc}{rgb}{0.,0.,0.8}
\definecolor{qqqqff}{rgb}{0.,0.,1.}

\end{center}

Notice that these curves are closed but this does not contradict Bertrand's theorem since not \emph{all} harmonics are closed (only those with rational $k$).

This theorem (of revolving orbits) remains largely forgotten until 1997, when it was studied in works \cite{Bell1},\cite{Bell2}. A generalization was discovered by Mahomed and Vawda in 2000 \cite{Mahomed}. They assumed that the radial distance $r$, and the angle $\varphi$ changes according to rule:
\begin{equation}\label{Mahomedtr}
r\to \frac{ar}{1-br},\qquad \varphi \to \frac{1}{k}\varphi,
\end{equation}
where $a,b$ are given constant.

They proved that such a transform of the solution is equivalent to changing the force as follows:
\begin{equation}\label{Mohamedeq}
F(r)\to \frac{a^3}{(1-br)^2}F\zav{\frac{ar}{1-br}}+\frac{L^2}{m r^3}(1-k^2)-\frac{bL^2}{mr^2},
\end{equation}
where again $m$ is the particle mass and $L$ its angular momentum.

This result is complete as far as ``point'' transformations are considered. But -- as we will prove -- it can be generalized for a quite large class of \emph{nonlocal} transforms. More precisely:
\begin{theorem}\label{nonlocalrevolvingorbits}
Consider a transform $T_f$ depending on a smooth function $f$ that maps a planar curve given in polar coordinates $h(r,\varphi)=0$ into a curve in polar coordinates $h(\tilde r,\tilde\varphi)=0$ according to the rule:
$$
\tilde r=f(r),\qquad \tilde \varphi=\frac{1}{k}\inte{\varphi_0}{ \varphi}\frac{f^\prime( r(t)) r(t)^2}{f^2( r(t))}{\rm d}t.
$$

Let a plane curve $\gamma$ be a solution to the central force problem
$$
\ddot x=F(r)\frac{x}{r},\qquad x\cdot \dot x^\perp=L,
$$
where $L$ is the angular momentum, $\dot x^\perp$ is the vector perpendicular to $\dot x$ and $r:=\abs{x}$.

Then the transformed curve $T_f (\gamma)$ is a solution to the central force problem:
$$
\ddot x=\zav{f^\prime F(f)-\frac{L^2 k^2}{r^3}+\frac{L^2 f^\prime}{f^3}}\frac{x}{r},\qquad x\cdot \dot x^\perp=Lk,
$$
where $f\equiv f(r)$. 
\end{theorem}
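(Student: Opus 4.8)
\emph{Proof idea.} The plan is to reduce the whole statement to the classical orbit (Binet) equation, which encodes exactly the geometric content of a central‑force trajectory. Recall that a curve, parametrised by its polar angle $\varphi$ and written through $u:=1/r$, is the trajectory of a solution of $\ddot x=F(r)\,x/r$ with $x\cdot\dot x^\perp=L$ if and only if
$$
\frac{{\rm d}^2u}{{\rm d}\varphi^2}+u=-\frac{F(1/u)}{L^2u^2},
$$
this being nothing but the radial Newton equation $\ddot r-r\dot\varphi^2=F(r)$ rewritten with the help of $r^2\dot\varphi=L$, and the time parametrisation being recovered afterwards from the same relation. Consequently it suffices to show that $T_f(\gamma)$, described in its own polar coordinates $(r,\varphi)$, satisfies the Binet equation attached to the claimed force $\tilde F$ together with angular momentum $\tilde L=Lk$. (A curve by itself only determines the pair $(\tilde F,\tilde L)$ up to the rescaling $(\lambda^2\tilde F,\lambda\tilde L)$; the value $\tilde L=Lk$ is chosen because it makes the statement specialise to Newton's theorem of revolving orbits when $f=\mathrm{id}$ and to \eqref{Mohamedeq} when $f(r)=ar/(1-br)$ --- two checks worth doing at the end.)

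The set‑up is where care is needed, because the final force law simultaneously involves the radius $r=\abs{x}$ of $T_f(\gamma)$ and the quantity $f(r)$, which is the radius of the corresponding point of $\gamma$. So I would parametrise $T_f(\gamma)$ by its polar angle $\varphi$ and set $u=1/r$; at a matching point $\gamma$ has radius $f(r)$, hence inverse‑radius $w:=1/f(1/u)=:g(u)$, an algebraic function of $u$ with $g'(u)=f'(r)\,r^2/f(r)^2$, and polar angle $\vartheta$. The one non‑bookkeeping point is the role of the prescribed nonlocal rule: it is designed precisely so that ${\rm d}\vartheta/{\rm d}\varphi=\tfrac1k g'(u)$, which upon the chain rule produces the clean linear relation ${\rm d}u/{\rm d}\varphi=\tfrac1k\,{\rm d}w/{\rm d}\vartheta$. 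Differentiating once more, ${\rm d}^2u/{\rm d}\varphi^2=(g'(u)/k^2)\,{\rm d}^2w/{\rm d}\vartheta^2$.

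Now substitute. Since $w(\vartheta)$ satisfies the original Binet equation, ${\rm d}^2w/{\rm d}\vartheta^2=-w-F(1/w)/(L^2w^2)=-1/f(r)-f(r)^2F(f(r))/L^2$, and therefore
$$
\frac{{\rm d}^2u}{{\rm d}\varphi^2}+u=\frac{f'(r)\,r^2}{k^2f(r)^2}\left(-\frac1{f(r)}-\frac{f(r)^2F(f(r))}{L^2}\right)+\frac1r .
$$
The $F$‑term on the right produces $f'(r)F(f(r))$, the bare $1/r$ produces $-L^2k^2/r^3$, and the remaining term produces $L^2f'(r)/f(r)^3$. Comparing with the Binet equation ${\rm d}^2u/{\rm d}\varphi^2+u=-\tilde F(r)\,r^2/\tilde L^2$ and reading off the force that goes with the normalisation $\tilde L=Lk$ gives precisely the expression in the statement; fixing the time parametrisation of $T_f(\gamma)$ by $r^2\dot\varphi=Lk$ then realises it as an honest solution. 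One should also record the mild hypotheses this uses: $f$ smooth with $f'\neq 0$ on the range of radii swept by $\gamma$, so that $T_f(\gamma)$ is again a genuine curve, single‑valued in its angle, with $\tilde F$ defined on the corresponding range of $r$.

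The main obstacle is not conceptual at all --- it is keeping the bookkeeping straight in that last collapse, in particular never confusing, at any given occurrence, a function of the transformed radius $r$ with a function of the original radius $f(r)$, and similarly for the two angular variables. The quickest way to bypass most of this, and the one natural to the present paper, is to argue in pedal coordinates: from $1/p^2=u^2+(u')^2$ and the fact that a central‑force orbit of energy $E$ obeys $L^2/p^2=2E-2U(r)$ with $U'=-F$, one checks that $1/p^2$ transforms affinely under $T_f$ as a function of the radius, after which a single differentiation in the radius delivers $\tilde F$, with $\tilde L^2=k^2L^2$ and $\tilde E$ falling out along the way.
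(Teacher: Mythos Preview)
Your argument via the Binet equation is correct: the key identity $d\vartheta/d\varphi=\tfrac1k g'(u)$ is exactly what the nonlocal angular rule is engineered to produce, and from it the chain rule collapses the computation just as you describe, yielding the claimed force with $\tilde L=Lk$.

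This is, however, a genuinely different route from the paper's. The paper never touches the Binet equation; its proof is the three-line pedal-coordinate argument you sketch in your final paragraph. Concretely, Theorem~\ref{cfth} turns the original force problem into the pedal equation $\tilde L^2/p^2=F(r^2)+c$, the transform formula \eqref{Tftransform} (a special case of Theorem~\ref{transth}) rewrites this algebraically as $(\tilde Lk)^2/p^2=F(f^2)+\tilde L^2k^2/r^2-\tilde L^2/f^2+c$, and Theorem~\ref{cfth} read backwards identifies the new force. What your Binet approach buys is self-containment: it needs no pedal machinery and makes the role of the angular rule transparent. What the paper's approach buys is brevity and a direct illustration of its thesis that pedal coordinates linearise these problems --- the transform acts affinely on $1/p^2$, so no differential equation is ever solved. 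Your closing paragraph already recognises this; it is in fact the entire content of the paper's proof.
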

Notice that previous two results are included in this theorem as a special cases for $f(r):= r$ (Newton) and $f(r):= \frac{ar}{1-br}$ (\cite{Mahomed})\footnote{To get an exact match between Theorem \ref{nonlocalrevolvingorbits} and equation (\ref{Mohamedeq}) a little bit of scaling is needed. Concretely, in equation (\ref{Mohamedeq}) set $m:=1$ and divide the right hand side by $a^2$.}. In fact, those are the only instances of $f$, where the transform $T_f$ is local, i.e. the integrand is independent of $r$, that is 
$$
\frac{f^\prime r^2}{f^2}=1.
$$

Nonlocal transforms are quite natural to assume. 
Take for example the transform $A_\omega$ which rotate each point on a given curve around some fixed point by amount which is proportional to the area swept by the radius vector from some initial angle $\varphi_0$, i.e.
$$
\tilde r= r,\qquad \tilde \varphi= \varphi- \varphi_0+\omega\inte{ \varphi_0}{ \varphi} r^2(t){\rm d}t.
$$
The conservation of angular momentum in central force problems tells us exactly that such an area is proportional to the time it took to the test particle to travel from angle $\varphi_0$ to $\varphi$. Thus the transform $A_\omega$ describes how curves, given by a central force problem, change when passing to a rotating frame of reference (which is done on regular basis in celestial mechanics).


The main goal of this article is to make the case that results similar to Theorem \ref{nonlocalrevolvingorbits} and problem of classification of orbits are best studied in the so-called \emph{pedal coordinates}.

Pedal coordinates (\cite{Yates,Edwards}) describe the position of a point $x$ on a given curve $\gamma$ by two numbers $(r,p)$ -- where $r$ is the distance of $x$ from the origin and $p$ is the distance of origin to the tangent of $\gamma$ at $x$.

In pedal coordinates, we can even assume more general force problems than central ones. Specifically, we can include the situation when the force has Lorentz like (or magnetic) component -- i.e. a component that is pointing in the direction perpendicular to the velocity of a test particle and depending on the distance.

In fact, as we show, solutions of such force problems can be translated into pedal coordinates \emph{algebraically without any integration whatsoever} showing that they are indeed ``natural'' coordinates for the job. More precisely, we prove the following theorem:
\begin{theorem}\label{cfth} Consider a dynamical system:
\begin{equation}\label{dynsys}
\ddot x=F^\prime\zav{\abs{x}^2}x+2 G^\prime\zav{\abs{x}^2}{\dot x}^\perp,
\end{equation}
describing an evolution of a test particle (with position $x$ and velocity $\dot x$) in the plane in the presence of central $F$ and Lorentz like $G$ potential. The quantities:
$$
L=x\cdot \dot x^\perp+G\zav{\abs{x}^2}, \qquad c=\abs{\dot x}^2-F\zav{\abs{x}^2},
$$
are conserved in this system.

Then the curve traced by $x$ is given in pedal coordinates by
\begin{equation}
\frac{\zav{L-G(r^2)}^2}{p^2}=F(r^2)+c,
\end{equation}
with the pedal point at the origin. Furthermore, the curve's image is located in the region given by
\begin{equation}\label{regine}
\frac{\zav{L-G(r^2)}^2}{r^2}\leq F(r^2)+c.
\end{equation}
\end{theorem}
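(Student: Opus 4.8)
The plan is to reduce the statement to two elementary ingredients: the two conservation laws, which fall out of a direct differentiation, and a closed-form expression for the pedal distance $p$ in terms of position and velocity. Once both are available, the pedal equation is a one-line substitution and the region estimate follows from the trivial bound $p\le r$.

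First I would check that $L$ and $c$ are first integrals of (\ref{dynsys}). Since $\dot x\cdot\dot x^\perp=0$ identically, differentiating gives $\frac{d}{dt}(x\cdot\dot x^\perp)=x\cdot\ddot x^\perp$; substituting (\ref{dynsys}) and using $(v^\perp)^\perp=-v$ together with $x\cdot x^\perp=0$ yields $x\cdot\ddot x^\perp=-2G^\prime(|x|^2)\,(x\cdot\dot x)=-\frac{d}{dt}G(|x|^2)$, whence $\dot L=0$. Likewise $\frac{d}{dt}|\dot x|^2=2\dot x\cdot\ddot x=2F^\prime(|x|^2)\,(x\cdot\dot x)=\frac{d}{dt}F(|x|^2)$, the Lorentz term again dropping out because $\dot x\cdot\dot x^\perp=0$; hence $\dot c=0$.

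Second, I would record the pedal identity. The tangent line to the trajectory at $x$ is $\{x+t\dot x:t\in\mathbb R\}$; minimizing $|x+t\dot x|^2$ over $t$ and invoking the two-dimensional Lagrange identity $|x|^2|\dot x|^2-(x\cdot\dot x)^2=(x\cdot\dot x^\perp)^2$ gives
$$
p^2=\frac{(x\cdot\dot x^\perp)^2}{|\dot x|^2},
$$
valid at every regular point (where $\dot x\neq 0$). Combining this with $x\cdot\dot x^\perp=L-G(r^2)$ and $|\dot x|^2=F(r^2)+c$, and writing $r=|x|$, produces $p^2=(L-G(r^2))^2/(F(r^2)+c)$, which rearranges into the claimed pedal equation.

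For the admissible region I would note that the foot of the perpendicular dropped from the origin onto any line through $x$ lies no farther from the origin than $x$ does, i.e. $p\le r$; feeding $1/p^2\ge 1/r^2$ into the pedal equation gives $F(r^2)+c=(L-G(r^2))^2/p^2\ge(L-G(r^2))^2/r^2$, which is (\ref{regine}). I do not expect a genuine obstacle here: all the content is in the passage to pedal coordinates, exactly as the paper advertises. The only points needing a word of care are the orientation convention for $v\mapsto v^\perp$ (it must be used consistently, but all final formulas involve only squares, so the sign is immaterial) and the regularity hypothesis $\dot x\neq 0$ underlying the formula for $p$; instants where $\dot x=0$, or where the tangent passes through the origin so that $p=0$, are handled by continuity.
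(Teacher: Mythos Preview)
Your proof is correct and follows essentially the same route as the paper: verify the two conservation laws by taking (the equivalent of) scalar products of (\ref{dynsys}) with $\dot x$ and with $x^\perp$, invoke the identity $p=\dfrac{x\cdot\dot x^\perp}{|\dot x|}$, and deduce (\ref{regine}) from $p\le r$. Your version is in fact slightly more explicit than the paper's (you derive the pedal formula via minimisation and the Lagrange identity, and you flag the regularity assumption $\dot x\neq 0$), but the underlying argument is identical.
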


The structure of the paper is the following: In Section \ref{pedalsec} we give an introduction to the Pedal coordinates since -- in the author's opinion -- this subject is largely forgotten and hence we do not assume any background knowledge on the reader's part.

Theorem \ref{cfth} tells us that answering the question: 
\\

\emph{For a given curve $\gamma$, what forces do we have to impose on a test particle to move along it?}
\\

is straightforward if we are provided with pedal equation of $\gamma$.

In Section \ref{MTPCS} we derive an efficient method how to translate quite general curves into pedal coordinates, provided that they are given as a solution of a autonomous differential equation (of any order) in polar coordinates. This will be the statement of Proposition \ref{MPC} -- a result which (to the best of the author's knowledge) is new.

Characterization of orbits in known curves, of course, depends on which curves are known. There are entire books filled with interesting curves (for example \cite{Yates},\cite{Lawrence}) but instead of memorizing them all, it is often better to look for some ``transforms'' that connects them, e.g. Pedal, circle inverse, parallel, dual, etc.


Some of these classical transforms will be introduced in Sections \ref{pedalsec} and \ref{transsec}, along with a method how to translate known transforms in polar coordinates into pedal coordinates (Theorem \ref{transth}).

Along this line in Section \ref{evolutesec} we also provide a pedal analogue to more advanced transforms like evolute, involute, contrapedal and catacaustic -- which the author was also unable to find in the literature (Proposition \ref{evolute}).

In Section \ref{CentralForce} we give a proof of Theorem \ref{cfth} and also (basically as a corollary) proof of Theorem \ref{nonlocalrevolvingorbits}.

Finally, we apply these results to a concrete examples. First, we focus on the relativistic version of the Kepler problem in Section \ref{RKP}, where we have successfully managed to classify solutions in terms of $sn$-spirals, introduced in Section \ref{spiralssec} -- a generalization of the famous sinusoidal spirals.

We also tackle the ``dark Kepler problem'' in Section \ref{DKP} (in addition to the central body, gravitational influences of dark matter and dark energy are allowed), where we show that a particular solution is the Cartesian oval, as seen from a rotating frame of reference. 
Furthermore we show that additional solutions can be obtained using a nonlocal transform that can be constructed from the rotating frame transforms $A_\omega$ and point transforms. 

The author would like to thank Miroslav Engli\v s and Michal Marvan for careful reading of the manuscript and suggesting numerous improvements.
\section{Pedal coordinates}\label{pedalsec}

Remember that the ``pedal coordinates'' of a point $x$ on a differentiable curve $\gamma$ in the plane are given by two positive real numbers $(r,p)$, where $r$ is the distance of $x$ from some given point $O$ (the so called \emph{pedal point}) and $p$ is the distance of $O$ from the tangent line of $\gamma$ at $x$.
\begin{center}
{\small
\begin{tikzpicture}[domain=-4:5]
\draw [thick,color=blue] (-4,3) to [out=0,in=180] (0,2) to [out=0, in=225] (2,3);
\coordinate [label=right:{$\gamma$}] (M1) at (2,3);
\coordinate [label=above:{$x$}] (M2) at (0,2);
\draw [thin] (-4,2) to (2,2);
\draw [dashed] (-3,2) to (-3,0);
\draw [dashed] (0,2) to (-3,0);
\filldraw (-3,0) circle (0.05);
\filldraw (0,2) circle (0.05);
\filldraw (-3,2) circle (0.05);
\coordinate [label=below:{$O$}] (M3) at (-3,0);
\coordinate [label=above:{$P(x)$}] (M4) at (-3,2);
\draw [densely dotted] (0,2) to (0,0);
\draw [densely dotted] (-3,0) to (0,0);
\filldraw (0,0) circle (0.05);
\coordinate [label=below:{$P_c(x)$}] (M5) at (0,0);
\coordinate [label=left:{$p$}] (M6) at (-3,1);
\coordinate [label=below right:{$r$}] (M7) at (-1.5,1);
\coordinate [label=below right:{$p_c$}] (M8) at (-1.5,0);
\end{tikzpicture}
}
\end{center}

It is useful to measure also the distance of $O$ to the normal (the ``contrapedal coordinate'' $p_c$) even though it is not an independent quantity and it relates to $(r,p)$ as 
$$
p_c:=\sqrt{r^2-p^2}.
$$
For every point $x\in \gamma$ we can define two additional points denoted in the picture $P(x), P_c(x)$ and thus create two additional curves. Let us denote by $P(\gamma)$ the \emph{pedal curve} -- i.e. the locus of points $P(x)$. And by $P_c(\gamma)$ the \emph{contrapedal curve} -- i.e. the locus of points $P_c(x)$.

In fact, one of the main advantages of pedal coordinates is that the operation of making pedal curve (which would in general require solving a differential equation in Cartesian coordinates) can by done by simple algebraic manipulation.

Concretely (see \cite[p. 228]{williamson}), to any curve $\gamma$ given by the equation
$$
f(p,r)=0,
$$ 
in pedal coordinates, the pedal curve $P(\gamma)$ satisfies the equation
$$
f\zav{r,\frac{r^2}{p}}=0.
$$

The contrapedal curve $P_c(\gamma)$ is much harder to obtain, but for specific examples it can be done as well (as we will see).

Even making the curve's harmonics, introduced in Section \ref{Intro} is easily done in pedal coordinates:
\begin{eqnarray}\label{harmtrans}
f\zav{\frac{1}{p^2},r}=0&\stackrel{H_k}{\longrightarrow} & f\zav{\frac{k^2}{p^2}-\frac{k^2-1}{r^2},r}=0.
\end{eqnarray}

Many additional ``transforms'' (i.e. operations that bring curves into different curves) can be described easily in pedal coordinates, for example:
\begin{eqnarray}
\label{translist}f(p,r)=0 &\stackrel{S_\alpha}{\longrightarrow}& f(\alpha p, \alpha r,)=0,\\
f(p,r)=0 &\stackrel{I_R}{\longrightarrow}& f\zav{\frac{R p}{r^2},\frac{R}{r}}=0,\\
f\zav{p,r^2}=0 &\stackrel{E_c}{\longrightarrow}& f\zav{p-c,r^2-2pc+c^2}=0,\\
f(p,r)=0 &\stackrel{D_R=I_R P}{\longrightarrow}& f\zav{\frac{R }{r},\frac{R}{p}}=0,\\
f\zav{\frac{1}{p^2},r}=0 &\stackrel{F_c= P E_c P^{-1}}{\longrightarrow}& f\zav{\frac{1}{(r-c)^4}\zav{\frac{r^4}{p^2}-2cr+c^2},r-c}=0,\\
\label{translistp} f\zav{\frac{1}{p^2},r}=0 &\stackrel{E^\star_c:=D_1 E_c D_1 =I_1 F_c I_1}{\longrightarrow}& f\zav{\frac{1}{p^2}-\frac{2c}{r}+c^2,\frac{r}{1-cr}}=0,
\end{eqnarray}
where:

$S_\alpha$ is the scaling of a curve by a factor $\alpha$.

$I_R$ is the circle inverse with respect to a circle at pedal point with radius $R$.

$D_R$ is the dual curve, i.e. a curve in the dual projective space consisting of the set of lines tangent to the original curve.

$E_c$ is the parallel (or equidistant) curve at distance $c$.

$F_c$ maps every point $x\to x+c\frac{x}{\abs{x}}$, i.e. shifts a curve away from the pedal point $O$ by the fixed amount $c$ (which can be negative).

And finally, the $E^\star_c$ changes the radial component of a curve by $r\to \frac{r}{1-cr}$.

These transformation formulas are not hard to derive as we will see in Section \ref{transsec}. 
\subsection{Selected curves}
Simplest curves translates into pedal coordinates as follows:
\subsubsection{Line} The pedal equation of a line is obviously
$$
p=a,
$$
where $a\geq 0$ is the distance of the line from the pedal point $O$.

It is important to note that the other coordinate $r$ is \emph{not completely} arbitrary. In addition to being non-negative, it has to always be true that
$$
p\leq r,
$$
in other words the defining quality of the Euclidean space that ``shortest distance is the straight line'' must be obeyed. This puts on radius $r$ the constrain
$$
r\geq a.
$$ 
\subsubsection{Point}
The complement of a line is the curve given by
$$
r=a,\qquad a\geq0,
$$
which is \emph{not} a circle but it is actually a point distant $a$ from $O$. ($r=a$ is a circle in polar coordinates since the other coordinate -- the angle $\varphi$ -- is arbitrary. In pedal coordinates the other coordinate satisfies $p\leq a$ -- which is consistent with the picture that a curve consisted of single point has an arbitrary tangent line at this point.)
\subsubsection{Circle}
Combining these two equations into one
$$
p=R,\qquad r=R,
$$
we have finally arrived at the pedal equation of a circle (with center at the pedal point and radius $R$).

Unlike with Cartesian coordinates, where the change of origin is done trivially by simple translation, the position of pedal point influences heavily the form of pedal equation and there is no simple formula telling us how a change of pedal point changes the pedal equation.

For example, the pedal equation of a circle with pedal point on the circle is
$$
2Rp=r^2,
$$
and for arbitrary pedal point it looks as
$$
2Rp=r^2+R^2-a^2,
$$
where $a\geq 0$ is the distance of the center of circle to pedal point. The same equation can be described using contrapedal coordinate as
$$
p_c^2+(p-R)^2=a^2.
$$
(These equation are not obvious, but they can be neatly derived, once we understand how curvature translates into pedal coordinates.)

We can also consider the equation
$$
p=r,
$$
which describes \emph{all} the concentric circles with center at pedal point.
\begin{remark}
At this point it should be clear that pedal coordinates does not tell us everything about the curve and they actually describes many curves at once -- if you choose to differentiate between them.

The equation $p=a$ is valid for \emph{any} line distant $a$ and $r=a$ for any point distant $a$, etc.

Obviously, the pedal coordinates do not care about rotation around the pedal point and about the curve's parametrization, but it is actually not easy to tell in general the nature of ambiguity associated to a pedal equation -- in fact, it differs from equation to equation. As we will see in the next section, what we can tell is that all curves that a pedal equation describes are solutions to a nonlinear first order autonomous differential equation in polar coordinates.

This is actually \emph{an advantage} of pedal coordinates over other systems if you are interested only in the general shape of the curve and do not want to be distracted by details. 
\end{remark}
\subsubsection{Logarithmic spiral}
Next curve which satisfies linear equation in pedal coordinates
$$
p=a r,\qquad 0\leq a\leq 1,
$$
is the logarithmic spiral, where $a=\abs{\sin \alpha}$ and $\alpha$ is the angle between tangent and radial line, which is constant for logarithmic spirals.
\subsubsection{Circle involute and Spiral of Archimedes}
The curve which is sort of a contrapedal version of a line, i.e. satisfies
$$
p_c=a, \qquad a\geq 0,
$$
can be shown to be the ``Involute of a circle'' (with pedal point at the center).\footnote{It is a shame that no better name is given to such an important curve. It is as if we would say instead of ``parabola'' just ``antipedal of a line''.} This curve is often mistaken for Spiral of Archimedes, which is actually its pedal, i.e.
$$
p_c=a\qquad \stackrel{P}{\longrightarrow} \qquad \frac{1}{p^2}=\frac{1}{r^2}+\frac{c^2}{r^4}.
$$
The difference between those two curves can be understood as follows: While the legs of Circle involute are parallel:
$$
p_c=a\qquad \stackrel{E_c}{\longrightarrow} \qquad p_c=a,
$$ 
the legs of Archimedian spiral are spread constantly in a radial way, i.e.
$$
\frac{1}{p^2}=\frac{1}{r^2}+\frac{c^2}{r^4}\qquad \stackrel{F_c}{\longrightarrow} \qquad \frac{1}{p^2}=\frac{1}{r^2}+\frac{c^2}{r^4},
$$ 
which is easy to see since $F_c=P E_c P^{-1}$.
\section{Moving to pedal coordinates}\label{MTPCS}
It is known that a curve given in polar coordinates $f(\varphi,r)=0$ can be expressed in pedal coordinates by eliminating $\varphi$ from the equations
$$
f(\varphi,r)=0,\qquad p=\frac{r^2}{\sqrt{r^2+{r_\varphi^\prime}^2}}.
$$

In terms of the factor $r_\varphi^\prime$, the second equation gives us
$$
\abs{r_\varphi^\prime}=\frac{r p_c}{p}=P(p_c).
$$
Consequently, if a curve can be written as a solution of differential equation
$$
f\zav{r,\abs{r_\varphi^\prime}}=0,
$$
its pedal equation becomes simply
$$
f\zav{r,\frac{r}{p}p_c}=0.
$$
Or, alternatively, we can say that this curve is the pedal of
$$
f(p,p_c)=0.
$$
\begin{example}
As an example take the logarithmic spiral with the spiral angle $\alpha$:
$$
r=a e^{\frac{\cos\alpha}{\sin\alpha} \varphi}.
$$
Differentiating with respect to $\varphi$ we obtain
$$
r_\varphi^\prime= \frac{\cos\alpha}{\sin\alpha} ae^{\frac{\cos\alpha}{\sin\alpha} \varphi}=\frac{\cos\alpha}{\sin\alpha} r,
$$
hence 
$$
\abs{r_\varphi^\prime}=\abs{\frac{\cos\alpha}{\sin\alpha}} r,
$$
and thus in pedal coordinates we get
$$
\frac{r}{p}p_c=\abs{\frac{\cos\alpha}{\sin\alpha}} r \qquad \Rightarrow \qquad
\abs{\sin\alpha} p_c=\abs{\cos\alpha} p,
$$
or using the fact that $p_c^2=r^2-p^2$ we obtain
$$
p=\abs{\sin\alpha}r,
$$
as claimed in the previous section.
\end{example}
\begin{example}
Similarly, Spiral of Archimedes given by
$$
r=a\varphi,\qquad a\geq 0,
$$
can be written as a differential equation
$$
\abs{r_\varphi^\prime}=a,
$$
hence we get
$$
P\zav{p_c=a}\qquad \Rightarrow \qquad 
P\zav{r^2=p^2+a^2}\qquad \Rightarrow \qquad
\frac{r^4}{p^2}=r^2+a^2 \qquad \Rightarrow \qquad
\frac{1}{p^2}=\frac{1}{r^2}+\frac{a^2}{r^4},
$$
as claimed. Notice that first equality above tells right away that Spiral of Archimedes is the pedal of $p_c=a$ which we will see shortly is indeed Involute of a circle.
\end{example}

This approach can be generalized as follows:
\begin{proposition}\label{MPC}
A curve $\gamma$ which a solution of a $n$-th order autonomous differential equation ($n\geq 1$)
$$
f\zav{r,\abs{r_\varphi^\prime},r_\varphi^{\prime\prime},\abs{r_\varphi^{\prime\prime\prime}}\dots,r_\varphi^{2j},\abs{r_\varphi^{(2j+1)}},\dots, r_\varphi^{(n)}}=0,
$$
is the pedal of a curve given in pedal coordinates by
$$
f(p,p_c, p_c p_c^\prime,\zav{p_c p_c^\prime}^\prime p_c,\dots, (p_c\partial_p)^n p)=0.
$$
In other words
\begin{align*}
r&=P(p), & \abs{r_\varphi^\prime}&=P(p_c),\\
r_\varphi^{\prime\prime}&=P(p_c p_c^\prime), & \abs{r_\varphi^{\prime\prime\prime}}&=P\zav{(p_cp_c^\prime)^\prime p_c},\\
&\vdots & &\vdots \\
r_\varphi^{(2j)}&=P\zav{(p_c\partial_p)^{2j}p} &\abs{r_\varphi^{(2j+1)}}&=P\zav{(p_c\partial_p)^{2j+1}p}, & \forall j\in\mathbb{N}_0. 
\end{align*}
\end{proposition}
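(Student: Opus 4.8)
The plan is to identify the curve appearing in the conclusion --- call it $\delta$, the one with pedal equation $f(p,p_c,p_cp_c',\dots)=0$ --- as the \emph{antipedal} of $\gamma$, i.e.\ the envelope of the family of lines $\ell_x$, one through each $x\in\gamma$, perpendicular to the segment joining $x$ to the pedal point $O$. On a regular arc this envelope exists, its tangent lines are exactly the $\ell_x$, and since the foot of the perpendicular dropped from $O$ onto $\ell_x$ is $x$ itself, the pedal curve of $\delta$ is $\gamma$, that is $\gamma=P(\delta)$. It then suffices to show that the pedal substitution $f(p,r)=0\mapsto f(r,r^2/p)=0$ carries the pedal equation of $\delta$ onto the polar equation of $\gamma$ termwise, which reduces to two facts about the correspondence $y\in\delta\leftrightarrow P(y)=x\in\gamma$: that $r_\gamma=p_\delta$, and that, as operators on functions along the curves,
$$
\frac{d}{d\varphi_\gamma}=\pm\,p_{c,\delta}\,\frac{d}{dp_\delta}.
$$
The first-order instance of the proposition, already recorded above in the form $\abs{r_\varphi^\prime}=\tfrac{rp_c}{p}$, is precisely the case $k=1$ of this, and the whole argument is its iteration.

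The equality $r_\gamma=p_\delta$ is immediate: matching $f(p,r)=0$ with its pedal $f(r,r^2/p)=0$ identifies the pedal coordinate of $\delta$ with the radial coordinate of $\gamma$ (and the radial coordinate of $\delta$ with $r_\gamma^2/p_\gamma$). For the operator identity I would first record the classical pedal calculus for an arbitrary curve: with $s$ the arc length, $\psi$ the angle between the radius vector and the tangent, $\theta$ the tangent angle and $\rho$ the signed radius of curvature, one has $p=r\sin\psi$, $p_c=r\abs{\cos\psi}$, $\tfrac{dr}{ds}=\cos\psi$, $\tfrac{d\varphi}{ds}=\tfrac{\sin\psi}{r}$; differentiating $p=r\sin\psi$ along the curve and using $\theta=\varphi+\psi$ together with $\tfrac{d\theta}{ds}=\tfrac{1}{\rho}$ yields $\rho=r\,\tfrac{dr}{dp}$, hence $\tfrac{dp}{ds}=\pm\tfrac{p_c}{\rho}$. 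Applying this to $\delta$ and using that $x=P(y)$ is the foot of the perpendicular from $O$ to the tangent of $\delta$ at $y$ --- so the polar angle of $x$ equals the tangent angle of $\delta$ at $y$ up to the constant $\tfrac{\pi}{2}$, whence $\tfrac{ds_\delta}{d\varphi_\gamma}=\tfrac{ds_\delta}{d\theta_\delta}=\rho_\delta$ --- we obtain $\tfrac{d}{d\varphi_\gamma}=\rho_\delta\,\tfrac{d}{ds_\delta}=\rho_\delta\cdot\bigl(\pm\tfrac{p_{c,\delta}}{\rho_\delta}\bigr)\,\tfrac{d}{dp_\delta}=\pm\,p_{c,\delta}\,\partial_{p_\delta}$, the sign being locally constant along any regular arc.

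With these two facts the proof closes by iteration: from $r_\gamma=p_\delta$ one gets $r^\prime_{\gamma,\varphi}=\pm p_{c,\delta}$, then $r^{\prime\prime}_{\gamma,\varphi}=(\pm p_{c,\delta}\partial_{p_\delta})^2p_\delta=p_{c,\delta}p_{c,\delta}'$, and in general $r^{(k)}_{\gamma,\varphi}=(\pm p_{c,\delta}\partial_{p_\delta})^k p_\delta=(\pm1)^k\,(p_c\partial_p)^kp\big|_\delta$, which is exactly the list in the statement since $(p_c\partial_p)^0p=p$, $(p_c\partial_p)^1p=p_c$, $(p_c\partial_p)^2p=p_cp_c'$, $(p_c\partial_p)^3p=(p_cp_c')'p_c$, and so on. Substituting these into the polar equation $f(r,\abs{r^\prime_\varphi},r^{\prime\prime}_\varphi,\dots)=0$ of $\gamma$ produces exactly $f(p_\delta,p_{c,\delta},p_{c,\delta}p_{c,\delta}',\dots)=0$: the absolute values imposed on the odd-order derivatives in the statement are precisely what absorbs the residual sign $(\pm1)^k$, which is $+1$ for even $k$; and it is the autonomy of the equation --- no explicit dependence on $\varphi$ --- that makes the termwise substitution legitimate.

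The step I expect to be the genuine obstacle is the sign and degeneracy bookkeeping concealed in the $\pm$: the operator identity holds only with a locally constant sign, and its constancy --- as well as the very existence of the antipedal, the validity of $\rho=r\,dr/dp$, and the possibility of using $p_\delta$ as a local parameter --- fails at inflection points of $\delta$, at points where $\delta$ is tangent to a circle centred at $O$, and where $\rho_\delta\in\{0,\infty\}$. One must therefore argue arc by arc and verify that the absolute values in the statement are exactly what renders the identification independent of the chosen branch; everything else is the routine chain rule indicated above.
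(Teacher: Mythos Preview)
Your argument is correct and reaches the same underlying operator identity, but by a genuinely different route from the paper. The paper never constructs the antipedal curve, never mentions arc length, tangent angle, or curvature; it treats $P$ purely as the formal substitution $p\mapsto r$, $r\mapsto r^2/p$, notes that this forces $P(\partial_p g)=\partial_r P(g)$ by the chain rule, and then \emph{squares} the first-order relation. From $(r'_\varphi)^2=P(p_c^2)$ and $r''_\varphi=P(p_cp_c')$ it gets
\[
(p_c\partial_p)^2=p_cp_c'\,\partial_p+p_c^2\,\partial_p^2\ \stackrel{P}{\longmapsto}\ r''_\varphi\,\partial_r+(r'_\varphi)^2\,\partial_r^2=\partial_\varphi^2,
\]
so the even-order identities $r^{(2n)}_\varphi=P\bigl((p_c\partial_p)^{2n}p\bigr)$ drop out with no sign at all, and the odd case is obtained by one further application of $p_c\partial_p$, producing a single $\mathrm{sgn}(r'_\varphi)$ that the absolute value removes. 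Your geometric proof---building $\delta$ as the envelope, using $\varphi_\gamma=\theta_\delta\pm\tfrac{\pi}{2}$ and $dp/ds=\pm p_c/\rho$ to establish $\partial_{\varphi_\gamma}=\pm p_{c,\delta}\,\partial_{p_\delta}$ directly---carries the $\pm$ through every iteration and absorbs it at the end. What your approach buys is a clear picture of what the identity actually says and where it breaks (your closing paragraph on inflections and tangency to circles about $O$ is exactly the honest bookkeeping the paper's formal argument leaves implicit); what the paper's approach buys is a slicker algebraic shortcut that confines the sign issue to a single line.
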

\begin{proof}
Since we know that
$$
{r_\varphi^\prime}^2=\frac{r^2p_c^2}{p^2}=P(p_c^2),
$$
it follows by the properties of pedal operation $P$ and the chain rule that
$$
2P(p_c p_c^\prime)=P\zav{\frac{\partial p_c^2}{\partial p}}=\zav{\frac{\partial P(p_c^2)}{\partial P(p)}}=\frac{\partial \frac{r^2p_c^2}{p^2} }{\partial r}=\frac{\partial {r_\varphi^\prime}^2}{\partial r}=2r_\varphi^{\prime\prime}.
$$
Hence
$$
P\zav{(p_c\partial_p)^{2n} p}=P\zav{\zav{p_cp_c^\prime\partial_p+p_c^2 \partial_p^2}^n p}=\zav{P\zav{p_cp_c^\prime}\partial_r+P(p_c)^2 \partial_r^2}^n r
=\zav{r_\varphi^{\prime\prime}\partial_r+{r_\varphi^\prime}^2\partial_r^2}^n r=\partial_\varphi^{2n} r=r_\varphi^{(2n)},
$$
and
$$
P\zav{(p_c\partial_p)^{2n+1} p}=P\zav{(p_c\partial_p)(p_c\partial_p)^{2n} p}=\frac{rp_c}{p}\partial_r P\zav{(p_c\partial_p)^{2n} p}=\abs{r_\varphi^\prime}\partial_r r^{(2n)}={\rm sgn}(r_\varphi^\prime)r^{(2n+1)}.
$$
Taking the absolute value of the above finishes the proof.
\end{proof}
\begin{corollary}
The radius of curvature $\rho$ and hence the curvature $\kappa$ of a curve $\gamma$ given in pedal coordinates can be computed as
$$
\rho= r r^\prime, \qquad \kappa=\frac{1}{rr^\prime},
$$
where the differentiation is with respect to $p$.
\end{corollary}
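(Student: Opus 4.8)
The plan is to read the formula off the classical expression for the radius of curvature in polar coordinates, fed through the pedal dictionary already established above; it is essentially a one-line consequence of the computation in the proof of Proposition \ref{MPC}. First I would recall that a plane curve written in polar coordinates as $r=r(\varphi)$ has radius of curvature
$$
\rho=\frac{\zav{r^2+{r_\varphi^\prime}^2}^{3/2}}{r^2+2{r_\varphi^\prime}^2-r\,r_\varphi^{\prime\prime}} ,
$$
which comes from the Cartesian curvature formula applied to the parametrisation $\zav{r\cos\varphi,r\sin\varphi}$: a routine expansion gives $\abs{\dot x}^2=r^2+{r_\varphi^\prime}^2$ and the scalar cross product of velocity and acceleration equal to $r^2+2{r_\varphi^\prime}^2-r\,r_\varphi^{\prime\prime}$.

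Next I would substitute the pedal relations. From $p=\dfrac{r^2}{\sqrt{r^2+{r_\varphi^\prime}^2}}$ one has $r^2+{r_\varphi^\prime}^2=\dfrac{r^4}{p^2}$, so the numerator equals $\dfrac{r^6}{p^3}$ and ${r_\varphi^\prime}^2=\dfrac{r^4}{p^2}-r^2$. For the second derivative I would reuse exactly the identity established in the proof of Proposition \ref{MPC}: along the curve $p$ is a function of $r$, and $r_\varphi^{\prime\prime}=\tfrac12\,\partial_r\zav{\dfrac{r^4}{p^2}-r^2}=\dfrac{2r^3}{p^2}-\dfrac{r^4}{p^3}\dfrac{{\rm d}p}{{\rm d}r}-r$. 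Plugging these into the denominator, the $\tfrac{r^4}{p^2}$- and $r^2$-terms cancel and what remains is $\dfrac{r^5}{p^3}\dfrac{{\rm d}p}{{\rm d}r}$. Dividing, $\rho=\dfrac{r^6/p^3}{(r^5/p^3)({\rm d}p/{\rm d}r)}=r\,\dfrac{{\rm d}r}{{\rm d}p}=r\,r^\prime$ and hence $\kappa=1/\rho=\dfrac{1}{r\,r^\prime}$, as claimed.

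The one point that needs care — and the only genuine obstacle — is the licence to treat $p$ as a function of $r$ along $\gamma$ and to identify $\partial_r\bigl({r_\varphi^\prime}^2\bigr)$ with $2r_\varphi^{\prime\prime}$; but this is precisely the first displayed identity in the proof of Proposition \ref{MPC}, so nothing new is required. Should a self-contained argument be preferred, I would instead proceed intrinsically: writing $\psi$ for the angle between the radius vector and the tangent and $s$ for arc length, the relations $\dfrac{{\rm d}r}{{\rm d}s}=\cos\psi$, $r\dfrac{{\rm d}\varphi}{{\rm d}s}=\sin\psi$, $p=r\sin\psi$ and $\dfrac{1}{\rho}=\dfrac{{\rm d}(\varphi+\psi)}{{\rm d}s}$ combine, after differentiating $p=r\sin\psi$ along the curve, into $\dfrac{{\rm d}p}{{\rm d}s}=\dfrac{r}{\rho}\cos\psi=\dfrac{r}{\rho}\dfrac{{\rm d}r}{{\rm d}s}$, whence $\dfrac{{\rm d}p}{{\rm d}r}=\dfrac{r}{\rho}$. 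Finally I would run the quick check on the circle $2Rp=r^2$ with pedal point on the circle, where $\dfrac{{\rm d}r}{{\rm d}p}=\dfrac{R}{r}$ gives $\rho=R$, confirming both the normalisation and the sign.
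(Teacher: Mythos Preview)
Your proof is correct and your first argument is essentially the same computation as the paper's: both start from the polar curvature formula and push it through the dictionary coming from Proposition~\ref{MPC}, the only cosmetic difference being that the paper routes the substitution through the pedal operator $P$ (writing $\kappa=P\bigl((p^2+2p_c^2-p p_c p_c^\prime)/(p^2+p_c^2)^{3/2}\bigr)$ and then unwinding $P$), whereas you plug the relations $r^2+{r_\varphi^\prime}^2=r^4/p^2$ and $r_\varphi^{\prime\prime}=\tfrac12\partial_r({r_\varphi^\prime}^2)$ directly. Your alternative intrinsic derivation via $p=r\sin\psi$ and $\dfrac{{\rm d}p}{{\rm d}s}=\dfrac{r}{\rho}\dfrac{{\rm d}r}{{\rm d}s}$ is not in the paper; it is shorter, self-contained, and avoids any appeal to Proposition~\ref{MPC}, at the cost of introducing the tangential angle $\psi$ which the paper never needs.
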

\begin{proof}
The signed curvature $\kappa$ is given in polar coordinates by
$$
\kappa:=\frac{{r^2+2{r_\varphi^\prime}^2-rr_\varphi^{\prime\prime}}}{\zav{r^2+{r_\varphi^\prime}^2}^{\frac32}}.
$$
Notice that first derivative $r_\varphi^\prime$ is always raised to the second power hence Proposition \ref{MPC} can be applied and we get
$$
\kappa=P\zav{\frac{{p^2+2p_c^2-p p_c p_c^\prime}}{\zav{p^2+p_c^2}^{\frac32}}}=P\zav{\frac{{2r-pr^\prime}}{r^2}}=\frac{{2\frac{r^2}{p}-r\frac{\partial \frac{r^2}{p}}{\partial r}}}{\frac{r^4}{p^2}}=\frac{{2pr^2r^\prime-2r^2r^\prime p+r^3}}{p^2r^\prime\frac{r^4}{p^2}}=\frac{1}{rr^\prime}.
$$ 
\end{proof}
\begin{example}
We can use the fact that the radius of curvature $\rho$ is a quantity independent on a position of the pedal point, to write the pedal equation for a circle of radius $R$:
$$
\rho=R, \qquad \Rightarrow \qquad rr^\prime=R,
$$
integrating we get
$$
r^2=2Rp+c,
$$
as claimed.
\end{example}
\begin{example}\label{Kp}
Consider as a next example the \emph{Kepler problem} restricted to two dimensional plane, i.e. solution to the system of ODE's:
$$
\ddot{x}=-\frac{M}{\abs{x}^{3}}x,\qquad x\in\mathbb{R}^2,
$$
where $M$ is the reduced mass.
This equation can be written in polar coordinates as
$$
\zav{\frac{1}{r}}_\varphi^{\prime\prime}+\frac{1}{r}=\frac{M}{L^2}, \qquad \dot \varphi=\frac{L}{r^2},
$$
where $L$ is the angular momentum. This can be, of course, easily solved yielding the formula of a conic section. But taking the circle inverse of the first equation, which in polar coordinates can be done simply by
$$
r\to \frac{R}{r}, \qquad \varphi\to -\varphi,
$$
we get
$$
I_R\zav{ r_\varphi^{\prime\prime}+r=R\frac{M}{L^2}},
$$
and moving to pedal coordinates by Proposition \ref{MPC} we arrive at:
$$
I_RP\zav{p_cp_c^\prime+p=\frac{RM}{L^2}}\qquad \Rightarrow \qquad
I_RP\zav{r^2=2\frac{RM}{L^2}p+c}\qquad \Rightarrow \qquad
\frac{R^2}{p^2}=\frac{2M R^2}{L^2 r}+c.
$$ 
This approach gives us not only the pedal equation of the solution but also a way how to construct it. The next to the last expression instruct us that we must take Pedal and then Inverse (in other words Dual) of a circle with radius $\frac{RM}{L^2}$.

It is easy to check that when the pedal point is inside the circle we get an ellipse and for an outside pedal point a hyperbola and for the pedal point on the circle a parabola.

It is easy to see that this circle is in fact the inverse of the solution's circumcircle.
\end{example}
\begin{example}\label{Covalex}
Cartesian oval is defined to be a locus of points for which a linear combination of distances from two foci are constant, i.e.
$$
\abs{x}+\alpha\abs{x-a}=C.
$$
They were studied first by Descartes who realized that such shape can be used to produce lenses without spherical aberration.

In polar coordinates the equation becomes
$$
r+\alpha\sqrt{r^2-2\abs{a}r\cos \varphi+\abs{a}^2}=C.
$$
Solving for $\cos\varphi$ we get
$$
2\abs{a}\alpha^2\cos \varphi=r(\alpha^2-1)-\frac{b^2}{r}+2C,\qquad b^2:=C^2-\alpha^2\abs{a}^2.
$$
The quantity $b^2$ is indeed positive since the origin is inside of the oval, i.e. $\alpha\abs{a}\leq C$. Differentiating with respect to $\varphi$ we obtain
$$
-2\abs{a}\alpha^2\sin\varphi= \zav{\alpha^2-1+\frac{b^2}{ r^2}}r^\prime_\varphi.
$$
Thus we arrive at the differential equation:
$$
4\abs{a}^2\alpha^4=4\abs{a}^2\alpha^4\zav{\cos^2\varphi+\sin^2\varphi}=\zav{r(\alpha^2-1)-\frac{b^2}{r}+2C}^2+\zav{\alpha^2-1+\frac{b^2}{ r^2}}^2{r^\prime_\varphi}^2.
$$
Applying Proposition \ref{MPC} and rearranging we finally obtain:
$$
\frac{\zav{b^2-(1-\alpha^2)r^2 }^2}{4p^2}=\frac{Cb^2}{r}+(1-\alpha^2)C r -\zav{(1-\alpha^2)C^2+b^2}.
$$
It is worth to mentioning that this equation describes two Cartesian ovals (for $\alpha$ and -$\alpha$) simultaneously.
\begin{center}
\definecolor{qqttcc}{rgb}{0.,0.2,0.8}
\definecolor{qqzzqq}{rgb}{0.,0.6,0.}
\definecolor{uuuuuu}{rgb}{0.26666666666666666,0.26666666666666666,0.26666666666666666}
\definecolor{qqqqff}{rgb}{0.,0.,1.}

\end{center} 
\end{example}
\section{Transforms}\label{transsec}
The tools developed in the previous section allow us to translate easily some transforms from polar to pedal coordinates.

Take, for instance, the circle inverse transform $I_R$. As was mentioned earlier, the circle inverse maps a curve in polar coordinates $(r,\varphi)$ into a curve in different polar coordinates $(\tilde r,\tilde \varphi)$ as follows:
$$
\tilde r=\frac{R}{r},\qquad \tilde\varphi =-\varphi.
$$
Hence in pedal coordinates we have
$$
\tilde r=\frac{R}{r},
$$
and
$$
\frac{\tilde r\tilde p_c}{\tilde p}=\abs{\frac{\partial \tilde r}{\partial \tilde\varphi}}=\abs{\frac{\partial \frac{R}{r}}{\partial (-\varphi)}}=\frac{R \abs{r^\prime_\varphi}}{r^2}=\frac{R p_c}{rp}=\frac{\tilde r p_c}{p}.
$$
Thus 
$$
\frac{\tilde p_c}{\tilde p}=\frac{p_c}{p},
$$ 
solving for $\tilde p$ using $p_c^2=r^2-p^2$ we get:
$$
\tilde p=\frac{R p}{r^2},
$$
which implies
$$
f(r,p)=0 \qquad \stackrel{I_R}{\longrightarrow}\qquad f\zav{\frac{R}{r},\frac{R p}{r^2}}=0,
$$
as claimed.

A generalization of this argument gives the next theorem:
\begin{theorem}\label{transth}
Let $T$ be a transform that maps a curve in polar coordinates $(r,\varphi)=0$ into the curve $(\tilde r,\tilde \varphi)$ as follows
\begin{align*}
\tilde r&= f(r),\\
\tilde \varphi &= \inte{\varphi_0}{\varphi} g( r){\rm d}t.
\end{align*}
Then the transform $T$ in pedal coordinates takes form
\begin{align*}
h\zav{r,\frac{1}{p^2}}&=0 & &\stackrel{T}{\longrightarrow} & h\zav{f(r),\zav{\frac{f^\prime(r) r^2}{f(r)^2 g(r)}}^2\zav{\frac{1}{p^2}-\frac{1}{r^2}}+\frac{1}{f(r)^2}}=0,
\end{align*}
\end{theorem}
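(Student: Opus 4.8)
The plan is to imitate the explicit circle-inversion computation carried out at the start of this section; the only genuinely new ingredient is the chain rule applied to the \emph{nonlocal} change of angular variable, which turns out to affect only the first derivative $r_\varphi^\prime$.

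First I would recall the dictionary between polar and pedal data from Section~\ref{MTPCS}: along any curve given in polar coordinates,
$$
\frac{1}{p^{2}}=\frac{1}{r^{2}}+\frac{{r_\varphi^\prime}^{2}}{r^{4}},\qquad\text{hence}\qquad{r_\varphi^\prime}^{2}=r^{4}\zav{\frac{1}{p^{2}}-\frac{1}{r^{2}}} ,
$$
so that the pedal equation $h\zav{r,\frac1{p^2}}=0$ is the same thing as the first order autonomous polar equation $h\zav{r,\frac1{r^2}+\frac{{r_\varphi^\prime}^2}{r^4}}=0$, and conversely.

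Then I would compute how $r$ and $r_\varphi^\prime$ (hence $p$) change under $T$. Writing $\tilde r(\tilde\varphi)$ for the radial function of $T(\gamma)$, the definition of $T$ gives $\tilde r=f(r)$ and $\frac{{\rm d}\tilde\varphi}{{\rm d}\varphi}=g(r)$ by the fundamental theorem of calculus; assuming $g\circ r$ does not vanish, $\varphi\mapsto\tilde\varphi$ is a genuine reparametrization, and the chain rule yields $\tilde r_{\tilde\varphi}^\prime=f^\prime(r)r_\varphi^\prime/g(r)$. Inserting $\tilde r$ and $\tilde r_{\tilde\varphi}^\prime$ into the pedal--polar relation for $T(\gamma)$ and then eliminating ${r_\varphi^\prime}^2$ by the identity above gives
$$
\frac{1}{\tilde p^{2}}=\frac{1}{f(r)^{2}}+\frac{f^\prime(r)^{2}\,{r_\varphi^\prime}^{2}}{g(r)^{2}f(r)^{4}}=\zav{\frac{f^\prime(r)\,r^{2}}{f(r)^{2}g(r)}}^{2}\zav{\frac{1}{p^{2}}-\frac{1}{r^{2}}}+\frac{1}{f(r)^{2}} ,
$$
which, together with $\tilde r=f(r)$, is exactly the pair of substitutions occurring in the statement; feeding them into $h\zav{r,\frac1{p^2}}=0$ exactly as in the $I_R$ case produces the claimed pedal equation of $T(\gamma)$. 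As a check, the instances $f(r)=R/r,\ g\equiv-1$; $f(r)=\alpha r,\ g\equiv1$; and $f(r)=r,\ g\equiv1/k$ recover $I_R$, $S_\alpha$ and the harmonic $H_k$, reproducing (\ref{translist})--(\ref{translistp}) and (\ref{harmtrans}).

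The computation is essentially routine, so I do not expect a serious obstacle; the points needing care are (i) the legitimacy of the angular reparametrization -- one needs $g\circ r\neq 0$ along the curve for the nonlocal transform to be an honest change of parameter, the sign of $r_\varphi^\prime$ being absorbed by the absolute-value conventions adopted earlier -- and (ii) bookkeeping the \emph{direction} of the final substitution so that the output equation is consistent with the convention used for the transforms tabulated in Section~\ref{pedalsec}; comparison with the already-verified $I_R$ computation fixes this.
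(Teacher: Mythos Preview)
Your proposal is correct and follows essentially the same route as the paper's proof: both use the polar--pedal dictionary $(r_\varphi^\prime)^2=r^4\bigl(\tfrac{1}{p^2}-\tfrac{1}{r^2}\bigr)$ on each side, together with the chain rule $\tilde r_{\tilde\varphi}^\prime=f^\prime(r)r_\varphi^\prime/g(r)$, and then solve for $\tfrac{1}{\tilde p^2}$. The paper merely compresses the computation into a single chain of equalities written in terms of $p_c$ rather than $\tfrac{1}{p^2}$, but the content is identical.
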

\begin{proof}
$$
f^2\zav{\frac{f^2}{\tilde p^2}-1}=\zav{\frac{f \tilde p_c}{\tilde p}}^2=\zav{\frac{\tilde r \tilde p_c}{\tilde p}}^2=\zav{\frac{\partial \tilde r}{\partial\tilde\varphi}}^2=\zav{\frac{\partial f}{g\partial \varphi}}^2=\frac{{f^\prime}^2 \zav{r^\prime_\varphi}^2}{g^2}=\zav{\frac{f^\prime}{g}}^2 \zav{\frac{rp_c}{p}}^2=\zav{\frac{f^\prime r^2}{g}}\zav{\frac{r^2}{p^2}-1}.
$$
Solving for $\frac{1}{\tilde p^2}$ we obtain the result.
\end{proof}

By means of this theorem many transforms can be easily obtained. All the transforms listed in (\ref{translist})-(\ref{translistp}) and also (\ref{harmtrans}) are left to the reader as an excercise. For example, the harmonic transform $H_k$ in polar coordinates is given by
$$
\tilde r=r,\qquad \tilde \varphi=\frac{1}{k} \varphi,
$$
and so on.

The only exceptional transform which does not behave so well in polar coordinates is $E_c$ but since parallel curves share the normal, the distance to the normal $p_c$ is conserved and the distance to tangent $p$ is hence just shifted by $c$.

Relations between $F_c$ and $E_c$ and $E^\star_c$:
$$
F_c=P E_c P^{-1},\qquad E^\star_c=D_1 E_c D_1=I_1 F_c I_1,
$$
are then a consequence which can be verified in pedal coordinates by a little bit of algebra. 
The same goes for the properties of $H_k$:
\begin{align*}
H_{\pm 1} &= Id, & H_k H_l &= H_{kl},& H_k I_R &= I_R H_k, & F_c H_k&= H_k F_c,
\end{align*}
and so on.

With this theorem we can also see that the transform (\ref{Mahomedtr}) assumed in paper \cite{Mahomed} which generalized Newton theorem of revolving orbits is the composition of scaling $S_a$, dual parallel $E^\star_b$ and harmonics $H_k$:
$$
(\ref{Mahomedtr})=H_k E^\star_{b}S_a.
$$

Also we can see that the transform $T_f$ in Theorem \ref{nonlocalrevolvingorbits} is a special case of Theorem \ref{transth} for 
$$
g=\frac{f^\prime r^2}{f^2 k}.
$$
Hence we have
\begin{align}\label{Tftransform}
h\zav{r,\frac{1}{p^2}}&=0 & &\stackrel{T_f}{\longrightarrow} & h\zav{f(r),\frac{k^2}{p^2}-\frac{k^2}{r^2}+\frac{1}{f(r)^2}}=0.
\end{align}

%
%

\begin{example}
The circle inverse can be seen as a specific case of a more general \emph{complex power transform} (denoted here $M_{\alpha}$) which acts in polar coordinates as
$$
\tilde r= r^\alpha, \qquad \tilde \varphi=\alpha \varphi=\inte{0}{\varphi} \alpha {\rm d}t,
$$
for a given real number $\alpha$. (The case $\alpha=-1$ corresponds to the circle inverse.)

Using Theorem \ref{transth} we have
$$
h(r,p)=0 \qquad \stackrel{M_\alpha}{\longrightarrow}\qquad h\zav{r^\alpha,r^{\alpha-1}p}=0.
$$
It is easy to check that 
\begin{align*}
M_1&=Id, & M_{-1}&=I_{1}, & M_\alpha M_\beta &= M_{\alpha\beta}. & 
\end{align*} 
\end{example}
\begin{example}
One advantage of pedal coordinates over the polar ones is that even a nonlocal change in the $\varphi$ variable translates algebraically into pedal coordinates.

Take the transform $A_\omega$, for example, introduced in Section \ref{Intro}:
$$
\tilde r= r,\qquad \tilde \varphi= \varphi- \varphi_0+\omega\inte{ \varphi_0}{ \varphi} r^2(t){\rm d}t.
$$
By Theorem \ref{transth} we have 
\begin{equation}\label{Aomega}
f\zav{r,\frac{1}{p^2}}=0,\qquad \stackrel{A_\omega}{\longrightarrow} \qquad f\zav{r,\zav{\frac{1}{p^2}+2\omega +\omega^2 r^2}\frac{1}{(1+\omega r^2)^2}}=0.
\end{equation}

Combining this transform with complex powers $B_\alpha:= M_\frac12 A_\alpha M_2$,
\begin{equation}\label{Balpha}
f\zav{r,\frac{1}{p^2}}=0, \qquad \stackrel{B_\omega}{\longrightarrow} \qquad
f\zav{r,\zav{\frac{1}{p^2}+\frac{2\alpha}{r}+\alpha^2}\frac{1}{(1+\alpha r)^2}},
\end{equation}
we get another interesting nonlocal transform which will be important later. It can also be understood in polar coordinates in the sense of Theorem~\ref{transth}:
$$
\tilde r= r,\qquad \tilde \varphi= \varphi +\alpha \int r{\rm d} \varphi.
$$

\end{example}
\section{Evolute and related transforms}\label{evolutesec}
\subsection{Evolute}
Remember that for a curve $\gamma$ its evolute $E(\gamma)$ is defined to be the locus of centers of osculating circles. It also known that the normal of $\gamma$ becomes tangent line for the evolute (so alternatively, the evolute can be defined to be the locus of normal lines).
\begin{center}
{\small
\begin{tikzpicture}[domain=-4:5]
\draw [thick,blue] (-4,3) to [out=0,in=180] (0,2) to [out=0, in=225] (2,3);
\coordinate [label=right:{$\gamma$}] (M1) at (2,3);
\coordinate [label=above:{$x$}] (M2) at (0,2);
\coordinate [label=right:{$E(x)$}] (M2) at (0,1);
\draw [thin] (-4,2) to (2,2);
\draw [dashed] (-3,2) to (-3,0);
\draw [dashed] (0,2) to (-3,0);
\draw [dashed] (-3,0) to (0,1);
\filldraw (-3,0) circle (0.05);
\filldraw (0,2) circle (0.05);
\filldraw (-3,2) circle (0.05);
\filldraw (0,1) circle (0.05);
\coordinate [label=below:{$O$}] (M3) at (-3,0);
\draw [densely dotted] (0,2) to (0,0);
\draw [densely dotted] (-3,0) to (0,0);
\filldraw (0,0) circle (0.05);
\coordinate [label=left:{$p$}] (M6) at (-3,1);
\coordinate [label=below right:{$r$}] (M7) at (-1.5,1);
\coordinate [label=right:{$\tilde r$}] (M9) at (-1.0,0.5);
\coordinate [label=below right:{$p_c=\tilde p$}] (M8) at (-1.5,0);
\coordinate [label=right:{$\rho$}] (M10) at (0,1.5);
\coordinate [label=right:{$\tilde p_c$}] (M11) at (0,0.5);
\end{tikzpicture}
}
\end{center}
Since the radius of curvature $\rho$ of a curve with pedal coordinates $(p,r)$ is given by
$$
\rho=rr^\prime,
$$
the pedal coordinates $(\tilde p,\tilde r)$ of the evolute are
\begin{align*}
\tilde r&=\sqrt{p_c^2+\zav{rr^\prime-p}^2}, & &\text{or} & \tilde p_c&= p_c p_c^\prime \\
\tilde p&=p_c, & & & \tilde p&=p_c.
\end{align*}

We can also work out the derivatives
$$
\tilde p_c \tilde p_c^\prime=\tilde p_c\frac{\partial \tilde p_c}{\partial \tilde p}=p_c p_c ^\prime\frac{\partial p_c p_c^\prime}{\partial p_c}=(p_c p_c)^\prime p_c.
$$
Generally
$$
\zav{\tilde p_c\partial_{\tilde p}}^n \tilde p=\zav{ p_cp_c^\prime \partial_{p_c}}^n p_c=\zav{ p_c \partial_{p}}^n p_c=\zav{ p_c \partial_{p}}^{n+1} p.
$$
Hence we arrive at the following proposition:
\begin{proposition}\label{evolute} The evolute $E(\gamma)$ of a curve $\gamma$ which satisfies
$$
f\zav{p_c,p_c p_c^\prime,\zav{p_c p_c^\prime}^\prime p_c,\dots, \zav{p_c\partial_p}^n p}=0,
$$
where $n>1$, satisfies
$$
f\zav{p,p_c,p_c p_c^\prime,\zav{p_c p_c^\prime}^\prime p_c,\dots, \zav{p_c\partial_p}^{n-1} p}=0.
$$
In other words
$$
f\zav{p_c,p_c p_c^\prime,\dots, \zav{p_c\partial_p}^n p}=0,\qquad \stackrel{E}{\longrightarrow} \qquad f\zav{p,p_c,p_c p_c^\prime,\dots, \zav{p_c\partial_p}^{n-1} p}=0.
$$
\end{proposition}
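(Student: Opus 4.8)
I expect to read a proof that is essentially a bookkeeping argument: everything needed has already been assembled in the two paragraphs preceding the proposition.

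The plan is to reduce the statement to the change-of-variables identity already anticipated there, so that the passage to the evolute becomes a single relabeling of the ``derivative tower'' $p_c,\ p_cp_c',\ (p_cp_c')'p_c,\dots$ of $\gamma$. First I would record the two pedal-geometric facts about the evolute that were set up above: the tangent line of $E(\gamma)$ at $E(x)$ is the normal line of $\gamma$ at $x$, whence the distance from $O$ to it is $\tilde p=p_c$; and, since $E(x)=x+\rho\,n$ with $\rho=rr'$ (the preceding corollary) and $n$ the unit normal, decomposing $x$ and $E(x)$ in the orthonormal tangent--normal frame of $\gamma$ at $x$ gives $\tilde r^2=p_c^2+(rr'-p)^2$, hence $\tilde p_c=\sqrt{\tilde r^2-\tilde p^2}=|rr'-p|$. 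Because $p_c=\sqrt{r^2-p^2}$ yields $p_cp_c'=\tfrac12(p_c^2)'=\tfrac12(r^2-p^2)'=rr'-p$ (differentiation with respect to $p$), this is exactly $\tilde p_c=p_cp_c'$, up to the sign bookkeeping already present in Proposition~\ref{MPC}.

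Next I would establish the operator identity $\tilde p_c\,\partial_{\tilde p}=p_c\,\partial_p$. Since $\tilde p=p_c$ we have $\partial_{\tilde p}=\partial_{p_c}$, and on the open set where $p_c'\neq 0$ the chain rule gives $\partial_{p_c}=(p_c')^{-1}\partial_p$; this is where the hypothesis $n>1$ enters, since then $p_cp_c'$ genuinely occurs among the arguments of the defining equation, so $p_c'$ is not identically zero. Consequently
\[
\tilde p_c\,\partial_{\tilde p}=p_cp_c'\cdot(p_c')^{-1}\,\partial_p=p_c\,\partial_p,
\]
and, iterating and using $(p_c\partial_p)p=p_c$, one gets for every $m\geq 0$
\[
(\tilde p_c\partial_{\tilde p})^{m}\,\tilde p=(p_c\partial_p)^{m}\,p_c=(p_c\partial_p)^{m+1}\,p,
\]
which is precisely the relation sketched before the proposition: the derivative tower of the evolute, computed in the evolute's own pedal coordinates, is the tower of $\gamma$ shifted up by one index.

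The conclusion is then immediate. The defining relation of $\gamma$ reads $f\big((p_c\partial_p)^1p,(p_c\partial_p)^2p,\dots,(p_c\partial_p)^np\big)=0$, and substituting $(p_c\partial_p)^{k}p=(\tilde p_c\partial_{\tilde p})^{k-1}\tilde p$ for $k=1,\dots,n$ turns it into
\[
f\big((\tilde p_c\partial_{\tilde p})^{0}\tilde p,(\tilde p_c\partial_{\tilde p})^{1}\tilde p,\dots,(\tilde p_c\partial_{\tilde p})^{n-1}\tilde p\big)=f\big(\tilde p,\ \tilde p_c,\ \tilde p_c\tilde p_c',\ \dots,\ (\tilde p_c\partial_{\tilde p})^{n-1}\tilde p\big)=0,
\]
which is exactly the asserted equation for $E(\gamma)$, the tildes being merely the pedal coordinates of the new curve; the order has dropped from $n$ to $n-1$, as it must. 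For $n=1$ the same bookkeeping would yield only the zeroth-order relation $f(p)=0$, which generally under-determines the curve (e.g. the involute of a circle, $p_c=a$, has for its evolute the circle $p=r=a$, while $p=a$ alone is satisfied also by every line at distance $a$) — hence that case is excluded. I expect the only friction to be organizational rather than mathematical: tracking the index shift and the alternating pattern of absolute values (odd-order entries carried through the pedal operation $P$ as $|\cdot|$, exactly as in Proposition~\ref{MPC}) consistently, and being explicit that $\partial_{p_c}=(p_c')^{-1}\partial_p$ is legitimate only away from the critical locus $p_c'=0$; the genuinely geometric input — $\tilde p=p_c$ and $\tilde p_c=p_cp_c'$ — is already supplied, so no fresh computation with osculating circles is needed.
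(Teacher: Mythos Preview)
Your proposal is correct and follows essentially the same route as the paper: the paper's proof is precisely the pre-proposition computation establishing $\tilde p=p_c$, $\tilde p_c=p_cp_c'$, and then the operator identity $(\tilde p_c\partial_{\tilde p})^n\tilde p=(p_c\partial_p)^{n+1}p$ via $p_c'\partial_{p_c}=\partial_p$, from which the index shift follows by substitution. Your treatment is in fact more careful than the paper's—you make explicit the absolute value in $\tilde p_c=|p_cp_c'|$, the need for $p_c'\neq 0$ in the chain rule, and the role of the hypothesis $n>1$—none of which the paper comments on.
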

\begin{example}
Since the circle with the pedal point at center can be described
$$
p_c=0,
$$
its involute (i.e. inverse of the evolute) is by the previous theorem
$$
p_c=0 \qquad \stackrel{E^{-1}}{\longrightarrow}\qquad p_cp_c^\prime=0 \qquad \Rightarrow\qquad p_c^2=a^2 \qquad \Rightarrow\qquad p_c=a,
$$
for some constant $a\geq 0$ as claimed.

We can also work out the case of the circle with general position of the pedal point:
$$
p_c^2+\zav{p-R}^2=a^2 \qquad \stackrel{E^{-1}}{\longrightarrow}\qquad
\zav{p_cp_c^\prime}^2+\zav{p_c-R}^2=a^2.
$$
This separable ODE has in addition to the solution above also the solution
$$
p+\sqrt{a^2-(p_c-R)^2}-R\arctan\frac{p_c-R}{\sqrt{a^2-(p_c-R)^2}}=c,
$$
showing once again extreme dependence of pedal coordinates on the position of the pedal point.
\end{example}
\subsection{Contrapedal}
It is known that the contrapedal curve $P_c$ defined in the beginning is the pedal of evolute, i.e. $P_c:=PE$ (see \cite[p. 151]{Zwikker}). Thus we have:
\begin{corollary}
$$
f\zav{p_c,p_c p_c^\prime,\dots, \zav{p_c\partial_p}^n p}=0,\qquad \stackrel{P_c:=P E}{\longrightarrow} \qquad P\zav{f\zav{p,p_c,p_c p_c^\prime,\dots, \zav{p_c\partial_p}^{n-1} p}=0},
$$
or using Proposition \ref{MPC}:
$$
f\zav{p_c,p_c p_c^\prime,\dots, \zav{p_c\partial_p}^n p}=0\qquad \stackrel{P_c}{\longrightarrow} \qquad f\zav{r,\abs{r^\prime_\varphi},r^{\prime\prime}_{\varphi},\dots, r^{(n-1)}_\varphi}=0.
$$
Equivalently, we can say:
$$
f\zav{\abs{r^\prime_\varphi},r^{\prime\prime}_{\varphi},\dots, r^{(n)}_\varphi}=0 \qquad \stackrel{P E P^{-1}}{\longrightarrow} \qquad f\zav{r,\abs{r^\prime_\varphi},r^{\prime\prime}_{\varphi},\dots, r^{(n-1)}_\varphi}=0.
$$
\end{corollary}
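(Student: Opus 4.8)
The plan is to obtain the Corollary as a straightforward composition of the two results already proved, together with the one classical fact that the contrapedal curve is the pedal of the evolute, $P_c=P\circ E$ (see \cite[p. 151]{Zwikker}). So suppose the curve $\gamma$ satisfies $f\zav{p_c,p_c p_c^\prime,\dots,\zav{p_c\partial_p}^n p}=0$ with $n>1$. First I would apply Proposition \ref{evolute}: its hypothesis is precisely this equation, and its conclusion is that the evolute $E(\gamma)$ satisfies $f\zav{p,p_c,p_c p_c^\prime,\dots,\zav{p_c\partial_p}^{n-1} p}=0$. Then I would apply the pedal operation $P$ to that equation; since $P$ acts purely algebraically on pedal equations, the curve $P_c(\gamma)=P\zav{E(\gamma)}$ satisfies $P\zav{f\zav{p,p_c,p_c p_c^\prime,\dots,\zav{p_c\partial_p}^{n-1} p}=0}$, which is the first asserted form.

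For the second form I would invoke Proposition \ref{MPC} in reverse: it identifies the pedal of the curve $f\zav{p,p_c,p_c p_c^\prime,\dots,\zav{p_c\partial_p}^{m} p}=0$ with the curve satisfying the $m$-th order autonomous polar ODE $f\zav{r,\abs{r_\varphi^\prime},r_\varphi^{\prime\prime},\dots,r_\varphi^{(m)}}=0$, via the dictionary $r=P(p)$, $\abs{r_\varphi^\prime}=P(p_c)$, $r_\varphi^{(k)}=P\zav{\zav{p_c\partial_p}^k p}$. Taking $m=n-1$ and applying this to the evolute's equation rewrites the conclusion of the previous step as $f\zav{r,\abs{r_\varphi^\prime},r_\varphi^{\prime\prime},\dots,r_\varphi^{(n-1)}}=0$, the second form. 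Finally, for the ``equivalent'' third form I would run the argument one step earlier: a curve $\gamma$ given in polar coordinates by an equation with no explicit $r$-dependence, $f\zav{\abs{r_\varphi^\prime},r_\varphi^{\prime\prime},\dots,r_\varphi^{(n)}}=0$, is by Proposition \ref{MPC} exactly the pedal of $f\zav{p_c,p_c p_c^\prime,\dots,\zav{p_c\partial_p}^n p}=0$, so $P^{-1}(\gamma)$ meets the hypothesis of the first part and hence $P E P^{-1}(\gamma)=P_c\zav{P^{-1}(\gamma)}$ satisfies $f\zav{r,\abs{r_\varphi^\prime},\dots,r_\varphi^{(n-1)}}=0$ by what was just shown.

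The only things needing care are the parity bookkeeping in Proposition \ref{MPC} — the alternation of $\abs{\cdot}$ on the odd-order derivatives, which must be carried along unchanged through $P$, $E$ and $P^{-1}$ — and the well-definedness of $P$ and $P^{-1}$ as algebraic operations on pedal equations. The single genuinely external input is $P_c=P\circ E$, which rests on the fact that the tangent line of $E(\gamma)$ at $E(x)$ is the normal line of $\gamma$ at $x$, so the foot of the perpendicular from the pedal point onto the tangent of $E(\gamma)$ coincides with the foot onto the normal of $\gamma$, i.e.\ with $P_c(x)$; I expect this geometric observation, not any computation, to be the step most in need of justification.
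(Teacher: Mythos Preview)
Your proposal is correct and follows essentially the same route as the paper: the paper does not give a separate proof but simply notes that $P_c=PE$ (citing \cite[p.~151]{Zwikker}) and presents the Corollary as the composition of Proposition~\ref{evolute} with Proposition~\ref{MPC}, which is exactly the decomposition you spell out. Your added remarks on the parity bookkeeping of the odd-order derivatives and on the geometric reason behind $P_c=PE$ are sound elaborations not made explicit in the paper.
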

\subsection{Catacaustic}
For a given curve $\gamma$ the ``catacaustic curve'' $C$ is defined to be the locus of lines reflected by $\gamma$ originating from a given point (the so-called ``radiant'' which, for our purposes, will coincide with the pedal point). It is known (\cite[p. 60 and 207]{Lawrence}) that catacaustic is the same as the evolute of the orthonomic curve (which is the pedal curve magnified by factor 2), hence:
\begin{corollary} The catacaustic curve $C$ with radiant at the pedal point is given by 
$$
C=E S_\frac12 P.
$$
\end{corollary}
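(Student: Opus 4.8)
The plan is to read off $C=E S_{1/2}P$ from the two classical facts recalled just before the corollary, after translating them into the pedal-transform calculus of Section~\ref{transsec}. First I would recall the precise content of the facts cited from \cite{Lawrence}, with the radiant taken at the pedal point $O$. The orthonomic curve $W$ of $\gamma$ is the locus of the mirror images of $O$ in the tangent lines of $\gamma$; since the foot of the perpendicular from $O$ to a tangent line is the midpoint between $O$ and its mirror image, $W$ is precisely the image of the pedal curve $P(\gamma)$ under the homothety centred at $O$ of ratio $2$. Moreover the rays emanating from $O$ and reflected by $\gamma$ are exactly the normal lines of $W$, so their envelope — the catacaustic $C$ — is the evolute $E(W)$. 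Hence $C=E\bigl(h_{O,2}(P(\gamma))\bigr)$, and the statement reduces to recognizing the factor-$2$ magnification $h_{O,2}$ among the transforms of Section~\ref{transsec}.

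Next I would identify $h_{O,2}$ with the transform $S_{1/2}$. Magnifying a curve about the pedal point $O$ by a factor $\lambda$ multiplies the two pedal coordinates $r$ and $p$ of every point by $\lambda$, so on the level of equations it sends $f(p,r)=0$ to $f(p/\lambda,r/\lambda)=0$; comparing with the rule $f(p,r)=0\stackrel{S_\alpha}{\longrightarrow}f(\alpha p,\alpha r)=0$ of (\ref{translist}) shows that $S_\alpha$ is magnification by the factor $1/\alpha$. Taking $\lambda=2$ gives $\alpha=1/2$, i.e. $W=S_{1/2}P(\gamma)$, and therefore, with compositions read right-to-left as elsewhere in the paper, $C=E S_{1/2}P$.

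The delicate points are purely matters of bookkeeping: one must not confuse $S_2$ with $S_{1/2}$ (the convention of (\ref{translist}) inverts the scaling factor), and one must check that the evolute appearing in \cite{Lawrence} coincides with the operation $E$ of Proposition~\ref{evolute} (locus of centres of osculating circles, equivalently envelope of normals), so that $E S_{1/2}P$ is a legitimate composition of transforms already available to us. Both checks are routine. If one prefers an internal sanity check rather than invoking \cite{Lawrence}, one can push a circle not passing through $O$ through $P$, then $S_{1/2}$, then $E$ using the pedal formulas of this section together with Proposition~\ref{evolute}, and compare the outcome with the classically known catacaustic of a circle.
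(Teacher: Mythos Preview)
Your proposal is correct and follows exactly the paper's route: the paper simply states the two facts from \cite{Lawrence} (catacaustic is the evolute of the orthonomic, and the orthonomic is the pedal magnified by~$2$) and writes ``hence'', while you spell out the one remaining bookkeeping step, namely that magnification by~$2$ about $O$ corresponds to $S_{1/2}$ under the convention of (\ref{translist}). Your explicit warning about the $S_\alpha$ convention is apt, since the paper's verbal description ``scaling by a factor $\alpha$'' is at odds with the rule $f(p,r)\mapsto f(\alpha p,\alpha r)$, and the example following the corollary confirms that $S_2$ indeed halves a curve.
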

\begin{example}
One of the most beautiful features of a conic section is that lines coming from one focus are reflected to the other focus.

It is hence obvious that the catacaustic curve of a conic section with the radiant(= pedal) at one focus is a single point (the other focus). Reversing this logic, the \emph{anticatacaustic} $C^{-1}=P^{-1} S^{-1}_{\frac12} E^{-1}=P^{-1}S_{2}E^{-1}$ of a point contains conic sections (and, as we will see, nothing more).

Since the pedal equation of a point is $r=a$ or $p_c^2+p^2=a^2$ (i.e. a circle with zero radius), we thus have:
\begin{align*}
C^{-1}\zav{p_c^2+p^2=a^2}&\Rightarrow P^{-1} S_2 E^{-1}\zav{p_c^2+p^2=a^2}\\
&\Rightarrow P^{-1}S_2 \zav{(p_cp_c^\prime)^2+p_c^2=a^2} \\
&\Rightarrow P^{-1}S_2 \zav{(p-2R)^2+p_c^2=a^2} \\
&\Rightarrow P^{-1} \zav{(p-R)^2+p_c^2=\frac{a^2}{4}} \\
&\Rightarrow P^{-1} \zav{ 2R p=r^2+R^2-\frac{a^2}{4}} \\
&\Rightarrow 2R\frac{p^2}{r}=p^2+R^2-\frac{a^2}{4}\\
&\Rightarrow \frac{R^2-\frac{a^2}{4}}{p^2}= \frac{2R}{r}-1.
\end{align*}
Notice that the third equation from the end informs us that a conic section is the antipedal of a circle, giving us yet another method of construction. But since it holds
$$
P^{-1}= I_RPI_R,
$$
which can be easily verified, this construction is actually very close to the one from Example \ref{Kp}.
\end{example}
%

\section{$f$ spirals}\label{spiralssec}

Transforms derived in previous two sections allow us to construct large quantity of curves. It is advantageous to collect them into families.
\subsubsection{Sinusoidal spiral} The family of curves $\sigma_n(a)$, given in polar and pedal coordinates
$$
r^n=a^n\sin\zav{n\varphi+\varphi_0}, \qquad a^n p =r^{n+1},
$$
respectively, contains many famous curves, e.g.
\begin{align*}
n& & a^n p &=r^{n+1} & &\text{Curve} &\text{Pedal point:}\\
n&=0 & p&=r & &\text{Concentric circle }\abs{x}=R. & \text{Center.}\\
n&=-1 & p&=a & &\text{Line.} & \text{A point distant }a.\\
n&=1 & a p&=r^2 & &\text{Circle. } & \text{On the circle.}\\
n&=2 & a^2 p &=r^3 & &\text{Lemniscate of Bernoulli.} & \text{Center.}\\
n&=-2 & rp &=a^2 & &\text{Rectangular hyperbola.} & \text{Center.}\\
n&=-\frac12 & a^{-\frac12} p &=r^{\frac12} & &\text{Parabola.} & \text{Focus}\\
n&=\frac12 & a^{\frac12} p &=r^{\frac32} & &\text{Cardioid.} & \text{Cusp.}
\end{align*}
This family, called ``sinusoidal spirals'', is famously invariant under a number of transforms, for example: 
\begin{align*}
\sigma_n(a) &\stackrel{S_\alpha}{\longrightarrow} \sigma_{n}\zav{\frac{a}{\alpha}} & \text{Scaling.}\\
\sigma_n(a) &\stackrel{P}{\longrightarrow} \sigma_{\frac{n}{n+1}}(a) & \text{Pedal.}\\
\sigma_n(a) &\stackrel{I_R}{\longrightarrow} \sigma_{-n}\zav{\frac{R}{a}} & \text{Inverse.}\\
\sigma_n(a) &\stackrel{M_\alpha}{\longrightarrow} \sigma_{\alpha n}\zav{a^{\frac{1}{\alpha}}} & \text{Complex power.}
\end{align*}
Combining scaling with complex power, we get 
$$
\sigma_1(a) \qquad \stackrel{S_{\beta}M_\alpha}{\longrightarrow} \qquad \sigma_{\alpha}(a),\qquad \beta:=a^\frac{1-\alpha}{\alpha},
$$
which shows that \emph{all} sinusoidal spirals are (up to scaling) complex powers of a circle passing through the origin $\sigma_1(a)$.

\subsubsection{Spirals}
A similar argument can be made about another family of curves $\varsigma_\alpha(c)$:
$$
r=c\varphi^\alpha \qquad \frac{1}{p^2}=\frac{\alpha^2 c^{\frac{2}{\alpha}}}{r^{2+\frac{2}{\alpha}}}+\frac{1}{r^2}.
$$
Specific cases include among others:
\begin{align*}
\alpha&=1 & \frac{1}{p^2}&=\frac{1}{r^2}+\frac{c^2}{r^4} &\text{Spiral of Archimedes}\\
\alpha&=-1 & \frac{1}{p^2}&=\frac{1}{r^2}+\frac{1}{c^2} &\text{Hyperbolic spiral}\\
\alpha&=\frac12 & \frac{1}{p^2}&=\frac{1}{r^2}+\frac{c^4}{4 r^6} &\text{Fermat spiral}\\
\alpha&=-\frac12 & \frac{1}{p^2}&=\frac{1}{r^2}+\frac{r^2}{4 c^4} &\text{Lituus.}
\end{align*}
Similarly, they are also invariant under complex powers and scaling
\begin{align*}
\varsigma_\alpha(c) &\stackrel{S_\gamma }{\longrightarrow} \varsigma_{\alpha}\zav{\frac{c}{\gamma}} & \text{Scaling.}\\
\varsigma_\alpha(c)&\stackrel{H_k}{\longrightarrow}\varsigma_{\alpha}\zav{\frac{c}{k^\alpha}} &\text{Harmonic is scaling.}\\ 
\varsigma_\alpha(c) &\stackrel{ M_\beta}{\longrightarrow} \varsigma_{\alpha\beta}\zav{\beta^{\frac{\alpha}{\beta}}c^{-\frac{1}{\beta}}} & \text{Complex power.}\\
\end{align*}
Which demonstrates that all of these spirals are complex powers of the spiral of Archimedes (for example; starting curve can be of course any spiral $\varsigma_\alpha(c)$).

\subsubsection{General spirals}
There is a pattern in previous two examples which can be generalized yielding more interesting families of curves. 
We start with a curve given in polar coordinates as 
$$
r=f\zav{\varphi+\varphi_0},
$$
for some well behaved function $f$.
Since the pedal coordinates are oblivious to any rotation, the phase factor $\varphi_0$ can be chosen arbitrarily.

Now we make all complex powers of this curve:
$$
r^\alpha=f\zav{ \alpha\varphi+\varphi_0},
$$
and harmonics
$$
r^\alpha=f\zav{l\alpha\varphi+\varphi_0},
$$
and scalings and we end up with:
\begin{definition} Given function $f$ the \emph{ $f$-spirals} is the family of curves given in polar coordinates by
$$
r^\alpha=\frac{c}{l} f\zav{ l\alpha\varphi+\varphi_0},
$$
where $\alpha,l,c,\varphi_0\in\mathbb{R}$, $l\neq 0$ are any real numbers so that the right hand side of this equation defines a real function.
\end{definition}
\begin{remark}
To get the pedal equation for this family it suffice to translate the original curve 
$$
r=f(\varphi),
$$ 
into pedal coordinates and then perform the transform: $S_{\zav{\frac{l}{c}}^{\frac{1}{\alpha}}} H_{\frac{1}{l}}M_\alpha$.
\end{remark}
Few examples are the following:
\begin{align}
f&=\exp & p&=\frac{l}{\sqrt{1+l^2}} r &\text{logarithmic spiral}\\
f&=Id & \frac{1}{p^2}&=\frac{1}{r^2}+\frac{l^2c^{2\alpha}}{r^{2\alpha+2}} &\text{ spirals}\\
\label{sineq}f&=\sin,\cos & \frac{1}{p^2}&=\frac{1-l^2}{r^2}+\frac{c^2}{r^{2\alpha+2}} &\text{Harmonics of sinusoidal spirals}\\
\label{sinheq}f&=\sinh & \frac{1}{p^2}&=\frac{1+l^2}{r^2}+\frac{c^2}{r^{2\alpha+2}} &\sinh\text{ spirals}\\
\label{cosheq}f&=\cosh & \frac{1}{p^2}&=\frac{1+l^2}{r^2}-\frac{c^2}{r^{2\alpha+2}} &\cosh\text{ spirals}\\
\label{sneq}f&={\rm sn} & \frac{1}{p^2}&=\frac{1-l^2(k^2+1)}{r^2}+l^4k^2c^{-2} r^{2\alpha-2}+\frac{c^2}{r^{2\alpha+2}} &\text{elliptic sinusoidal spirals}\\
\label{cneq}f&={\rm cn} & \frac{1}{p^2}&=\frac{(k^2-{k^\prime}^2)l^2+1}{r^2}-l^4k^2c^{-2} r^{2\alpha-2}+\frac{{k^\prime}^2c^2}{r^{2\alpha+2}} &\text{elliptic cosinusoidal spirals.}
\end{align}

All the $f$ spirals families are invariant (by construction) under scaling, complex power and harmonic transforms but some of them happen to be invariant under larger group of transforms, e.g. logarithmic spirals are famously invariant under pretty much everything -- pedal, contrapedal, evolute, orthoptic, catacaustic, etc. Only parallel curves and involutes of logarithmic spirals are slightly different:
$$
p=\abs{\sin\alpha} r \qquad \stackrel{E_c=E^{-1}}{\longrightarrow} \qquad \abs{\cos\alpha}(p-c)=\abs{\sin\alpha} p_c.
$$
\subsubsection{Sinusoidal spirals}
Deriving the case $f=\sin$ is simple using the fact that the function $\sin\varphi$ is a solution of the differential equation
$$
{r^\prime_\varphi}^2+r^2=1,
$$
which translates into pedal coordinates by Proposition \ref{MPC} as
\begin{equation}
\frac{1}{p^2}=\frac{1}{r^4}\qquad \stackrel{S_{\zav{\frac{l}{c}}^{\frac{1}{\alpha}}} H_{\frac{1}{l}}M_\alpha}{\longrightarrow} \qquad
\frac{1}{p^2}=\frac{1-l^2}{r^2}+\frac{c^2}{r^{2\alpha+2}}.
\end{equation}
Hence the $\sin$-spirals are just harmonics of the usual sinusoidal spirals.

Notice that $\cos\varphi$ satisfies the same differential equation as $\sin\varphi$, hence the choice $f=\cos$ gives us exactly the same result. Thus the $cos$-spirals and the $sin$-spirals are (in pedal coordinates) indistinguishable.

This is a consequence of the fact that $\cos$ function can be obtain from $\sin$ function by simple shift of argument: $\cos\varphi=\sin\zav{\varphi+\frac12\pi}$.

Similarly we can ask what are $sinh$-spirals (i.e. $f=\sinh$). The differential equation is now
$$
{r^\prime_\varphi}^2-r^2=1,
$$
hence obtaining (\ref{sinheq}).
Making a similar argument for the $cosh$-spirals we get (\ref{cosheq}).

This time we cannot get $\cosh$ form $\sinh$ by any (real) shift of the argument so the pedal equations are different.

But using \emph{complex} translations we have: 
$\sinh\zav{\varphi+\frac{\imag\pi}{2}}=\imag\cosh(\varphi)$ which gives us $\cosh$ for purely imaginary scaling factor $c$. (Substituting $c\to \imag c$ in (\ref{sinheq}) gives us (\ref{cosheq}).)

Exploiting the relations:
$$
\imag\sin(\imag \varphi)=\sinh(\varphi),\qquad \sin\zav{\imag\varphi+\frac{\pi}{2}}=\cosh(\varphi),
$$ 
we can see that both $sinh$-spirals (\ref{sinheq}) and $cosh$-spirals (\ref{cosheq}) are derivable from $sin$-spirals (\ref{sineq}). (This time substituting $l\to \imag l$ in (\ref{sineq}) gives us (\ref{sinheq}).)

This connection between these three families is reflected in pedal coordinates by the fact that together they solve the general equation in pedal coordinates:
$$
\frac{1}{p^2}=\frac{a}{r^2}+\frac{b}{r^{2\alpha+2}}, \qquad \forall a,b\in\mathbb{R}, a\not=1, \ b\not=0.
$$

Indeed, for $a<1, b>0$ we have $sin$-spirals.
For $a>1, b>0$ we have $sinh$-spirals and for $a>1, b<0$ the $cosh$-spirals. The case $a<1, b<0$ does not define any curve in pedal coordinates since in this case it holds $p>r$, which is impossible.

The limiting cases $a=1$ are just $Id$-spirals and $b=0$ gives us logarithmic spirals. 
\subsubsection{Elliptic spirals}
We are going now to make similar argument for elliptic version of sinusoidal spirals (i.e. ``snusoidal spiral'').

Remember that there are 12 Jacobian elliptic functions:
$$
{\rm sn}(z),{\rm cn}(z), {\rm dn}(z), {\rm ns}(z), {\rm nc}(z), {\rm nd}(z), {\rm sc}(z), {\rm sd}(z), {\rm cs}(z), {\rm ds}(z), {\rm cd}(z), {\rm dc}(z),
$$
all of which depend on an additional parameter $k\in (-1,1)$ (the so-called ``modulus'') which we will not explicitly mention. Those function are doubly periodic with periods $4K, 4\imag K^\prime $, where $K\equiv K(k)$ (so-called \emph{quarter period}) is the complete elliptic integral of the first kind
$$
K(k):=\inte{0}{\frac{\pi}{2}}\frac{1}{\sqrt{1-k^2\sin\varphi}}{\rm d}\varphi=\frac{\pi}{2}\!\! \ _2 F_1\zav{\nadsebou{\frac12\quad \frac12}{1};k^2},
$$
and $k^\prime:=\sqrt{1-k^2}$, $K^\prime:=K(k^\prime)$.

These functions are connected by the formula
$$
pq(z)=\frac{pr(z)}{qr(z)},
$$
where $p,q,r$ can be any of the letters $s,c,d,n$. In particular $pp(z)=1$ and $pq(z)=\frac{1}{qp(z)}$.

The parameter $k$ can be analytically continued beyond the interval $(-1,1)$. In fact, for fixed $z$, all elliptical functions are meromorphic with respect to $k^2$.

With this in mind we can construct a $13^{th}$ elliptical function (which we denote sn$^\star$) which is just sn function with modulus $k$ restricted to the unit circle in the complex plane, more precisely:
$$
{\rm sn}^\star(z;\lambda):=e^{\imag \frac{\lambda}{2}} {\rm sn}\zav{e^{-\imag \frac{\lambda}{2}}z, e^{\imag\lambda}}.
$$
The property of the sn function:
$$
{\rm sn}(z,k)= \frac{1}{k}{\rm sn}\zav{zk,\frac{1}{k}},
$$
ensures that ${\rm sn}^\star$ is a real valued function (for real argument). No other function than sn has analogous property (save $ns$) and without this function the picture will be incomplete, as we will see.

Let us start with sn function. For the pedal equation of $sn$-spirals
$$
r^\alpha=\frac{c}{l}{\rm sn}\zav{l\alpha\varphi+\varphi_0},
$$
we make use of the differential equation valid for $r={\rm sn}(\varphi)$
$$
{r^\prime_\varphi}^2=(1-r^2)(1-k^2 r^2),
$$ 
which translates into pedal coordinates by Proposition \ref{MPC} as
$$
\frac{r^4}{p^2}-r^2=(1-r^2)(1-k^2r^2),
$$
and making the transform $S_{\zav{\frac{l}{c}}^{\frac{1}{\alpha}}} H_{\frac{1}{l}}M_\alpha$ we get (\ref{sneq}).

Similar argument can be made for all the remaining 12 elliptic function but, fortunately, we can exploit many known relations to simplify the job: 
\begin{align}\label{ellrel}
{\rm ns}(z)&=k\ {\rm sn}(z+\imag K^\prime), &
{\rm sn}^\star(z)&:=e^{\imag \frac{\lambda}{2}} {\rm sn}\zav{e^{-\imag \frac{\lambda}{2}}z, e^{\imag\lambda}}
\\ \nonumber
{\rm cd}(z)&={\rm sn}(z+K), &
{\rm dc}(z)&=k\ {\rm sn}(z+K+\imag K^\prime)\\ \nonumber
{\rm dn}(z)&=k^\prime\ {\rm sn}(\imag z+K^\prime+\imag K,k^\prime), &
{\rm nd}(z)&={\rm sn}(\imag z+K^\prime,k^\prime),\\ \nonumber
{\rm cs}(z)&=\imag k^\prime\ {\rm sn}(\imag z+\imag K,k^\prime), &
{\rm sc}(z)&=-\imag\ {\rm sn}(\imag z,k^\prime),\\ \nonumber
{\rm sd}(z)&=\frac{1}{k^\prime} {\rm sn}\zav{zk^\prime,\frac{\imag k}{k^\prime}}, &
{\rm ds}(z)&=\imag k\  {\rm sn}\!\zav{zk^\prime+Kk^\prime+\imag K^\prime k^\prime,\frac{\imag k}{k^\prime}},\\ \nonumber
{\rm cn}(z)&= {\rm sn}\zav{zk^\prime+Kk^\prime,\frac{\imag k}{k^\prime}}, &
{\rm nc}(z)&=-\imag\ {\rm sn}\!\zav{zk^\prime+\imag K^\prime k^\prime,\frac{\imag k}{k^\prime}}.
\end{align} 
This table tells us two things: first, the functions ${\rm sn}\equiv {\rm cd}\equiv{\rm dc}\equiv{\rm ns}$ are equivalent in the sense that they generate the same family of $f$-spirals in pedal coordinates (since they differ only by scaling and a shift of the argument). In the same way the functions ${\rm cn}\equiv{\rm sd}$, ${\rm dn}\equiv{\rm nd}$, ${\rm nc}\equiv{\rm ds}$ and ${\rm sc}\equiv{\rm cs}$ are equivalent. The function sn$^\star$ stands alone. Hence we have only 6 distinct $f$-spirals (out of 13 Jacobian elliptic functions).

Second, the pedal equation of $f$-spirals for all elliptic functions can be obtained from the $sn$-spiral case. For example, the relation for cn function informs us that the $cn$-spirals (\ref{cneq}) can be obtained from the $sn$-spirals (\ref{sneq}) substituting $k\to \imag \frac{k}{k^\prime} $, $l\to k^\prime l$, $c\to k^\prime c$.

Similarly, ${\rm sn}^\star$-spirals:
$$
\frac{1}{p^2}=\frac{1-2l^2\cos\lambda}{r^2}+\frac{l^4}{c^2}r^{2\alpha-2}+\frac{c^2}{r^{2\alpha+2}}.
$$
are obtained by substituting $k\to e^{\imag\lambda} $, $l\to l e^{-\imag\frac{\lambda}{2}}$, in (\ref{sneq}).
And so on.

Once again this interconnectedness is reflected in pedal coordinates by the fact that together they solve the general equation:
\begin{equation}\label{ellspirals}
\frac{1}{p^2}=\frac{a}{r^2}+\beta r^{2\alpha-2}+\frac{\gamma}{r^{2\alpha+2}}. 
\end{equation}

Specifically:
\begin{align*}
a&<1-4\beta\gamma & \beta&>0 & \gamma&>0 & \text{ $sn$-spirals,}\\
a&>1+4\beta\gamma & \beta&>0 & \gamma&>0 & \text{ $sc$-spirals,}\\
(a-1)^2&\leq 4\beta \gamma & \beta&>0 & \gamma&>0 & \text{ $sn^\star$-spirals,} \\
& & \beta&<0 & \gamma&>0 & \text{ $cn$-spirals,}\\
& & \beta&>0 & \gamma&<0 & \text{ $ds$-spirals,}\\
(a-1)^2&\geq 4\beta\gamma & \beta&<0 & \gamma&<0 & \text{ $dn$-spirals,}\\
(a-1)^2&<4\beta\gamma & \beta&<0 & \gamma&<0 & \text{ no curve ($p>r$).}
\end{align*}

For $sn$-spiral, the exact solution is 
$$
c=\sqrt{\gamma},\qquad l^2=\sqrt{(1-a)^2-4\beta\gamma}+1-a,\qquad k=\frac{2\sqrt{\beta\gamma}}{\sqrt{(1-a)^2-4\beta\gamma}+1-a}.
$$
It is straightforward from (\ref{ellrel}) to make similar computation for all elliptic spirals.

The limiting case $\beta\gamma=0$ is just the $sin$-spiral (and its complex extensions) treated above.


\section{Central and Lorentz like force problems}\label{CentralForce}
We are ready to prove Theorem \ref{cfth}.
%

\begin{proof}
Making the scalar product of the equation (\ref{dynsys}) with $\dot x$ we obtain 
$$
\ddot x \cdot \dot x=F^\prime\zav{\abs{x}^2}x\cdot \dot x, 
$$
integrating we get the first conserved quantity
$$
\abs{\dot x}^2=F\zav{\abs{x}^2}+c.
$$
Similarly, making the scalar product with $x^\perp$ we get
$$
\frac{{\rm d}}{{\rm d} t}\zav{{\dot{x}} \cdot x^\perp}=\ddot x \cdot x^\perp=2 G^\prime\zav{\abs{x}^2}{\dot x}^\perp\cdot x^\perp= 2 G^\prime\zav{\abs{x}^2}{\dot x}\cdot x,
$$
with the integral
$$
\dot x\cdot x^\perp=-x\cdot {\dot x}^{\perp}=G\zav{\abs{x}^2}-L.
$$
The pedal coordinates $p$ (distance to the tangent vector) and $r$ (distance from the origin) are given by
$$
p=\frac{x\cdot {\dot x}^\perp}{\abs{\dot x}},\qquad r=\abs{x}.
$$
Hence we have
$$
p^2=\frac{\zav{G\zav{r^2}-L}^2}{F\zav{r^2}+c},
$$
as claimed. The inequality (\ref{regine}) is the consequence of the fact that
$$
p\leq r.
$$

\end{proof}
\begin{remark}
For the strictly central force case (i.e. $G\equiv 0$) we might equivalently say that ``the potential energy is inversely proportional to the square of the distance to the tangent'', i.e.
$$
F \propto\frac{1}{p^2}.
$$
\end{remark}
\begin{example}
%
%
At the time when it was still an open issue, it was suggested that orbit of planets are Cassini ovals, with Sun at one focus. The person who suggested it was, allegedly, Cassini himself and we are now in the position to see what force law we must assume for him to be right.

Remember that a Cassini oval is locus of points such that product of distances from two foci is constant, i.e.
$$
\abs{x}\abs{x-a}=C.
$$ 
Pedal form of this equation is easy to show to be (by an analogous argument as in Example \ref{Covalex}):
$$
\frac{\zav{3C^2+r^4-\abs{a}^2 r^2}^2 }{p^2}=4C^2\zav{\frac{2C^2}{r^2}+2r^2-\abs{a}^2}. 
$$
Hence the corresponding dynamical system by Theorem \ref{cfth} looks like
$$
\ddot x=\zav{8C^2 -\frac{16C^4}{\abs{x}^4}}x+\zav{\abs{x}^2-\abs{a}^2}\dot x^\perp,
$$
quite different from the gravitational inverse square law.

We can even work the case of a Cassini oval with Sun at the center, i.e.
$$
\abs{x-a}\abs{x+a}=C, \qquad \text{or}\qquad \abs{x^2-a^2}=C,
$$
where quantities $x,a$ are treated as complex numbers. This is, obviously, a (complex) square of a circle, i.e.
$$
\abs{x-a^2}=C \qquad \stackrel{M_{2}}{\longrightarrow} \qquad \abs{x^2-a^2}=C.
$$
Hence its pedal equation is
$$
2Rp=r^2+R^2-a^2, \qquad \stackrel{M_2}{\longrightarrow}\qquad 2R pr=r^{4}+R^2-a^2,
$$
or
$$
\frac{\zav{r^{4}+R^2-a^2}^2}{p^2}=4R^2 r^2.
$$
\end{example}

More challenging are inverse questions which we will tackle in some cases.
\begin{example}
Revisiting the Kepler problem from Example \ref{Kp}, 
$$
\ddot{x}=-\frac{M}{\abs{x}^{3}}x,
$$
we can arrive at the solution immediately without the need of polar coordinates:
$$
\frac{L^2}{p^2}=\frac{2M}{r}+c,
$$
albeit, apparently, without the information about the construction. But with a quite easy observation, making the transform (\ref{translistp}) $E^\star_{\alpha}$ with $\alpha:=-\frac{M}{L^2}$ we obain
$$
\frac{L^2}{p^2}=\frac{M^2}{L^2}+c,
$$
which is a line distant $\frac{L^2}{\sqrt{M^2+\frac{c}{L^2}}}$. Hence we have discovered that a conic section is dually parallel to a line -- which is, under close inspection, the same construction as in Example \ref{Kp}.

The dual curve $D_1$ of the solution is easy to see to be a circle:
$$
L^2r^2=2Mp+c.
$$ 
Hence we have recovered the famous Newton result on the solution's curve of velocities. This observation is usually derived by studying the Runge-Laplace-Lenz vector -- a conserved quantity we even do not need. In pedal coordinates this amounts to taking the dual curve.
\end{example}

Proof of Theorem \ref{nonlocalrevolvingorbits}:
\begin{proof}
For the sake of simplicity we write the force $F$ as $F(r)=r F^\prime(r^2)$.
\begin{align*}
\ddot x&=F^\prime(r^2)x \\
&\Downarrow Th \ref{cfth}\ (L=\tilde L)\\
\frac{\tilde L^2}{p^2}&= F(r^2)+ c \\
&\downarrow \ T_f \ (\ref{Tftransform}) \\
\frac{(\tilde Lk)^2}{p^2}&= F\zav{f^2}+\frac{\tilde L^2 k^2}{r^2}-\frac{\tilde L^2}{f^2}+ c \\
&\Uparrow Th \ref{cfth}\ (L=\tilde Lk)\\
\ddot x&=\zav{ f f^\prime F^\prime (f^2)-\frac{\tilde L^2 k^2}{r^3}+\frac{\tilde L^2 f^\prime}{f^3}}\frac{x}{r}.
\end{align*}
\end{proof}

\subsection{Kepler problem in General relativity}\label{RKP}
We can make the same analysis for many problems. Particularly interesting is the problem of orbits (or geodesics) around a non-rotating compact body described by the Schwarzschild solution of Einstein equations of General relativity \cite{schwarzschild}:
\begin{equation}\label{KPGR}
{r^\prime_\varphi}^2=\frac{r^4}{b^2}-\zav{1-\frac{r_s}{r}}\zav{\frac{r^4}{a^2}+a^2},
\end{equation}
where
$$
r_s:=\frac{2G M}{c^2}, \qquad a:=\frac{L}{GM c}, \qquad b:=\frac{cL}{E}.
$$
The quantity $r_s$ is the Schwarzschild radius and $a,b$ are length-scales introduced for brevity which depends on the angular momentum $L$ and energy $E$ of a test particle.

In pedal coordinates this becomes (using Proposition \ref{MPC}):
\begin{equation}\label{KPGRp}
\frac{1}{p^2}=d+\frac{r_s}{a^2 r}+\frac{r_s}{r^3}, \qquad d:=\frac{1}{b^2}-\frac{1}{a^2}. 
\end{equation}

The second part of Theorem \ref{cfth} informs us that the image of the trajectory is located in the region
$$
\frac{1}{r^2}\leq d+\frac{r_s}{a^2 r}+\frac{r_s}{r^3},
$$
or
$$
0\leq d r^3+\frac{r_s}{a^2} r^2-r+r_s=:h(r).
$$

Since the absolute term $r_s>0$ is positive, the origin is always included in the image (put $r=0$) and this ensures the existence of an unstable component (a component which includes the origin) where the trajectories will reach the origin (a.k.a. singularity). Furthermore this unstable region is at least $r_s$ long since $h(r_s)=\frac{r_s^3}{b^2}\geq 0$.

The overall behavior depends additionally on $N$ -- the number of positive real roots of the polynomial $h$. By the classical result of Fourier and Budan \cite{Fourier,Budan} this numbers is equal to $N= \nu-2\lambda$, where $\lambda\in\mathbb{Z}_+$ and $\nu$ \emph{sign variation} of the polynomial $h$ (i.e. how many times its non-zero coefficients change sign as listed in order of increasing degree).

The list of coefficients of polynomial $h$ is
$$
\zav{r_s,-1,\frac{r_s}{a^2}, d},
$$
hence the sign variation $\nu$ depends only on the sign of $d$
$$
\nu=\left\{ \nadsebou{2}{3}\nadsebou{d\geq 0}{d<0}\right.,
$$
thus the number of zeros is either $N=2,0$ for $d\geq 0$ or $N=3,1$ for $d<0$.

\begin{center}
\definecolor{xdxdff}{rgb}{0.49019607843137253,0.49019607843137253,1.}
\definecolor{qqzzqq}{rgb}{0.,0.6,0.}
\definecolor{uuuuuu}{rgb}{0.26666666666666666,0.26666666666666666,0.26666666666666666}
\definecolor{qqttcc}{rgb}{0.,0.2,0.8}
\begin{tikzpicture}[line cap=round,line join=round,>=triangle 45,x=1.0cm,y=1.0cm]
\draw[->,color=black] (-2.2994212840329946,0.) -- (4.0994728384867365,0.);
\foreach \x in {-2.,-1.,1.,2.,3.,4.}
\draw[shift={(\x,0)},color=black] (0pt,-2pt);
\draw[->,color=black] (0.,-1.808244285033815) -- (0.,2.074989615176037);
\foreach \y in {-1.,1.,2.}
\draw[shift={(0,\y)},color=black] (2pt,0pt) -- (-2pt,0pt);
\clip(-2.2994212840329946,-1.808244285033815) rectangle (4.0994728384867365,2.074989615176037);
\draw[line width=4.pt] (-6.115120681630512,5.468598371446387) -- (-2.73839555101324,5.468598371446387);
\draw[line width=4.pt] (-6.08135343032434,4.827020596629106) -- (-2.7046282997070676,4.827020596629106);
\draw[line width=4.pt] (-6.047586179018166,4.16855919615874) -- (-2.6708610484008943,4.16855919615874);
\draw[line width=4.pt] (-6.013818927711994,3.52698142134146) -- (-2.637093797094722,3.52698142134146);
\draw[line width=1.2pt,color=qqttcc,smooth,samples=100,domain=-2.2994212840329946:4.0994728384867365] plot(\x,{0.9*((\x)-1.0)^(3.0)-((\x)-1.0)-((\x)-1.0)+0.1});
\draw [line width=1.6pt,color=qqzzqq] (0.,0.)-- (1.0500564407024764,0.);
\draw (1.0097693439719326,-1.0484811306449306) node[anchor=north west] {$N=2$};
\draw [line width=1.6pt,color=qqzzqq,domain=2.465053316605841:4.0994728384867365] plot(\x,{(-0.-0.*\x)/3.3734329641487864});
\draw (-2.4513739149107736,-1.808244285033815) -- (-2.4513739149107736,2.074989615176037);
\draw (4.183890966752164,-1.808244285033815) -- (4.183890966752164,2.074989615176037);
\draw [domain=-2.2994212840329946:4.0994728384867365] plot(\x,{(-2.2100586204007278-0.*\x)/-1.});
\draw [domain=-2.2994212840329946:4.0994728384867365] plot(\x,{(--1.9095460389523327-0.*\x)/-1.});
\begin{scriptsize}
\draw [fill=black] (-4.122852854566322,5.468598371446387) circle (2.5pt);
\draw[color=black] (-3.954016598035458,5.865363574293915) node {$a = 0.9$};
\draw [fill=black] (-4.7306633780774305,4.827020596629106) circle (2.5pt);
\draw[color=black] (-4.612477998505826,5.223785799476635) node {$b = -1$};
\draw [fill=black] (-4.325456362403358,4.16855919615874) circle (2.5pt);
\draw[color=black] (-4.156620105872494,4.565324399006268) node {$c = 0.1$};
\draw [fill=black] (-3.9877838493416307,3.52698142134146) circle (2.5pt);
\draw[color=black] (-3.8695984697700263,3.923746624188989) node {$x0 = 1$};
\draw[color=qqttcc] (-0.3915715852342355,-1.495897210451718) node {$h(r)$};
\draw [fill=uuuuuu] (-0.5151097573083177,0.) circle (1.5pt);
\draw [fill=uuuuuu] (1.0500564407024764,0.) circle (1.5pt);
\draw [fill=uuuuuu] (2.465053316605841,0.) circle (1.5pt);
\draw [fill=uuuuuu] (0.,0.) circle (1.5pt);
\draw[color=uuuuuu] (0.30065706654230534,-0.2971597890825898) node {0};
\draw [fill=xdxdff] (5.838486280754627,0.) circle (2.5pt);
\draw [fill=uuuuuu] (0.,1.2) circle (1.5pt);
\draw[color=uuuuuu] (0.2753316280626758,0.9606703220723405) node {$r_s$};
\end{scriptsize}
\end{tikzpicture}
\definecolor{qqzzqq}{rgb}{0.,0.6,0.}
\definecolor{xdxdff}{rgb}{0.49019607843137253,0.49019607843137253,1.}
\definecolor{uuuuuu}{rgb}{0.26666666666666666,0.26666666666666666,0.26666666666666666}
\definecolor{qqttcc}{rgb}{0.,0.2,0.8}
\begin{tikzpicture}[line cap=round,line join=round,>=triangle 45,x=1.0cm,y=1.0cm]
\draw[->,color=black] (-2.2994212840329946,0.) -- (4.13324008979291,0.);
\foreach \x in {-2.,-1.,1.,2.,3.,4.}
\draw[shift={(\x,0)},color=black] (0pt,-2pt);
\draw[->,color=black] (0.,-1.8588951619930738) -- (0.,2.058105989522951);
\foreach \y in {-1.,1.,2.}
\draw[shift={(0,\y)},color=black] (2pt,0pt) -- (-2pt,0pt);
\clip(-2.2994212840329946,-1.8588951619930738) rectangle (4.13324008979291,2.058105989522951);
\draw[line width=4.pt] (-6.115120681630512,5.468598371446387) -- (-2.73839555101324,5.468598371446387);
\draw[line width=4.pt] (-6.08135343032434,4.827020596629106) -- (-2.7046282997070676,4.827020596629106);
\draw[line width=4.pt] (-6.047586179018166,4.16855919615874) -- (-2.6708610484008943,4.16855919615874);
\draw[line width=4.pt] (-6.013818927711994,3.52698142134146) -- (-2.637093797094722,3.52698142134146);
\draw[line width=1.2pt,color=qqttcc,smooth,samples=100,domain=-2.2994212840329946:4.13324008979291] plot(\x,{0.9*((\x)-1.0)^(3.0)+0.0*((\x)-1.0)-((\x)-1.0)+0.7});
\draw (1.0097693439719326,-1.0484811306449306) node[anchor=north west] {$N=0$};
\draw (-2.4513739149107736,-1.8588951619930738) -- (-2.4513739149107736,2.058105989522951);
\draw (4.183890966752164,-1.8588951619930738) -- (4.183890966752164,2.058105989522951);
\draw [domain=-2.2994212840329946:4.13324008979291] plot(\x,{(-2.2100586204007278-0.*\x)/-1.});
\draw [domain=-2.2994212840329946:4.13324008979291] plot(\x,{(--1.9095460389523327-0.*\x)/-1.});
\draw [line width=1.6pt,color=qqzzqq,domain=0.0:4.13324008979291] plot(\x,{(-0.-0.*\x)/5.838486280754627});
\begin{scriptsize}
\draw [fill=black] (-4.122852854566322,5.468598371446387) circle (2.5pt);
\draw[color=black] (-3.954016598035458,5.865363574293915) node {$a = 0.9$};
\draw [fill=black] (-4.392990865015704,4.827020596629106) circle (2.5pt);
\draw[color=black] (-4.325456362403358,5.223785799476635) node {$b = 0$};
\draw [fill=black] (-4.122852854566322,4.16855919615874) circle (2.5pt);
\draw[color=black] (-3.954016598035458,4.565324399006268) node {$c = 0.7$};
\draw [fill=black] (-3.9877838493416307,3.52698142134146) circle (2.5pt);
\draw[color=black] (-3.8695984697700263,3.923746624188989) node {$x0 = 1$};
\draw[color=qqttcc] (-0.2396189543564582,-1.495897210451718) node {$h(r)$};
\draw [fill=uuuuuu] (-0.3063314801980477,0.) circle (1.5pt);
\draw [fill=uuuuuu] (0.,0.) circle (1.5pt);
\draw[color=uuuuuu] (0.30065706654230534,-0.2971597890825898) node {0};
\draw [fill=xdxdff] (5.838486280754627,0.) circle (2.5pt);
\draw [fill=uuuuuu] (0.,0.8) circle (1.5pt);
\draw[color=uuuuuu] (0.2753316280626758,0.555463306398269) node {$r_s$};
\end{scriptsize}
\end{tikzpicture}

\definecolor{qqzzqq}{rgb}{0.,0.6,0.}
\definecolor{xdxdff}{rgb}{0.49019607843137253,0.49019607843137253,1.}
\definecolor{uuuuuu}{rgb}{0.26666666666666666,0.26666666666666666,0.26666666666666666}
\definecolor{qqttcc}{rgb}{0.,0.2,0.8}
\begin{tikzpicture}[line cap=round,line join=round,>=triangle 45,x=1.0cm,y=1.0cm]
\draw[->,color=black] (-2.316304909686081,0.) -- (4.116356464139823,0.);
\foreach \x in {-2.,-1.,1.,2.,3.,4.}
\draw[shift={(\x,0)},color=black] (0pt,-2pt);
\draw[->,color=black] (0.,-1.808244285033815) -- (0.,2.0412223638698648);
\foreach \y in {-1.,1.,2.}
\draw[shift={(0,\y)},color=black] (2pt,0pt) -- (-2pt,0pt);
\clip(-2.316304909686081,-1.808244285033815) rectangle (4.116356464139823,2.0412223638698648);
\draw[line width=4.pt] (-6.115120681630512,5.468598371446387) -- (-2.73839555101324,5.468598371446387);
\draw[line width=4.pt] (-6.08135343032434,4.827020596629106) -- (-2.7046282997070676,4.827020596629106);
\draw[line width=4.pt] (-6.047586179018166,4.16855919615874) -- (-2.6708610484008943,4.16855919615874);
\draw[line width=4.pt] (-6.013818927711994,3.52698142134146) -- (-2.637093797094722,3.52698142134146);
\draw[line width=1.2pt,color=qqttcc,smooth,samples=100,domain=-2.316304909686081:4.116356464139823] plot(\x,{0-0.9*((\x)-1.7)^(3.0)+2.7*((\x)-1.7)-((\x)-1.7)+0.2});
\draw (1.0097693439719326,-1.0484811306449306) node[anchor=north west] {$N=3$};
\draw (-2.4513739149107736,-1.808244285033815) -- (-2.4513739149107736,2.0412223638698648);
\draw (4.183890966752164,-1.808244285033815) -- (4.183890966752164,2.0412223638698648);
\draw [domain=-2.316304909686081:4.116356464139823] plot(\x,{(-2.2100586204007278-0.*\x)/-1.});
\draw [domain=-2.316304909686081:4.116356464139823] plot(\x,{(--1.9095460389523327-0.*\x)/-1.});
\draw [line width=1.6pt,color=qqzzqq] (0.,0.)-- (0.3887344566956177,0.);
\draw [line width=1.6pt,color=qqzzqq] (1.581471357542777,0.)-- (3.1297941857616047,0.);
\begin{scriptsize}
\draw [fill=black] (-4.7306633780774305,5.468598371446387) circle (2.5pt);
\draw[color=black] (-4.511176244587308,5.865363574293915) node {$a = -0.9$};
\draw [fill=black] (-3.4812750797490404,4.827020596629106) circle (2.5pt);
\draw[color=black] (-3.312438823218176,5.223785799476635) node {$b = 2.7$};
\draw [fill=black] (-4.291689111097185,4.16855919615874) circle (2.5pt);
\draw[color=black] (-4.122852854566322,4.565324399006268) node {$c = 0.2$};
\draw [fill=black] (-3.7514130901984215,3.52698142134146) circle (2.5pt);
\draw[color=black] (-3.531925956708299,3.923746624188989) node {$x0 = 1.7$};
\draw[color=qqttcc] (0.0642863073990963,5.916014451253174) node {$h(r)$};
\draw [fill=uuuuuu] (0.3887344566956177,0.) circle (1.5pt);
\draw [fill=uuuuuu] (1.581471357542777,0.) circle (1.5pt);
\draw [fill=uuuuuu] (3.1297941857616047,0.) circle (1.5pt);
\draw [fill=uuuuuu] (0.,0.) circle (1.5pt);
\draw[color=uuuuuu] (0.30065706654230534,-0.2971597890825898) node {0};
\draw [fill=xdxdff] (5.838486280754627,0.) circle (2.5pt);
\draw [fill=uuuuuu] (0.,1.7317) circle (1.5pt);
\draw[color=uuuuuu] (0.2753316280626758,1.5009463429711025) node {$r_s$};
\end{scriptsize}
\end{tikzpicture}
\definecolor{qqzzqq}{rgb}{0.,0.6,0.}
\definecolor{xdxdff}{rgb}{0.49019607843137253,0.49019607843137253,1.}
\definecolor{uuuuuu}{rgb}{0.26666666666666666,0.26666666666666666,0.26666666666666666}
\definecolor{qqttcc}{rgb}{0.,0.2,0.8}
\begin{tikzpicture}[line cap=round,line join=round,>=triangle 45,x=1.0cm,y=1.0cm]
\draw[->,color=black] (-2.316304909686081,0.) -- (4.0994728384867365,0.);
\foreach \x in {-2.,-1.,1.,2.,3.,4.}
\draw[shift={(\x,0)},color=black] (0pt,-2pt);
\draw[->,color=black] (0.,-1.8251279106869012) -- (0.,2.058105989522951);
\foreach \y in {-1.,1.,2.}
\draw[shift={(0,\y)},color=black] (2pt,0pt) -- (-2pt,0pt);
\clip(-2.316304909686081,-1.8251279106869012) rectangle (4.0994728384867365,2.058105989522951);
\draw[line width=4.pt] (-6.115120681630512,5.468598371446387) -- (-2.73839555101324,5.468598371446387);
\draw[line width=4.pt] (-6.08135343032434,4.827020596629106) -- (-2.7046282997070676,4.827020596629106);
\draw[line width=4.pt] (-6.047586179018166,4.16855919615874) -- (-2.6708610484008943,4.16855919615874);
\draw[line width=4.pt] (-6.013818927711994,3.52698142134146) -- (-2.637093797094722,3.52698142134146);
\draw[line width=1.2pt,color=qqttcc,smooth,samples=100,domain=-2.316304909686081:4.0994728384867365] plot(\x,{0-0.1*((\x)-2.8)^(3.0)+((\x)-2.8)-((\x)-2.8)-0.8});
\draw (1.0097693439719326,-1.0484811306449306) node[anchor=north west] {$N=1$};
\draw (-2.4513739149107736,-1.8251279106869012) -- (-2.4513739149107736,2.058105989522951);
\draw (4.183890966752164,-1.8251279106869012) -- (4.183890966752164,2.058105989522951);
\draw [domain=-2.316304909686081:4.0994728384867365] plot(\x,{(-2.2100586204007278-0.*\x)/-1.});
\draw [domain=-2.316304909686081:4.0994728384867365] plot(\x,{(--1.9095460389523327-0.*\x)/-1.});
\draw [line width=1.6pt,color=qqzzqq] (0.,0.)-- (0.8,0.);
\begin{scriptsize}
\draw [fill=black] (-4.4605253676280485,5.468598371446387) circle (2.5pt);
\draw[color=black] (-4.2410382341379265,5.865363574293915) node {$a = -0.1$};
\draw [fill=black] (-4.055318351953977,4.827020596629106) circle (2.5pt);
\draw[color=black] (-3.987783849341631,5.223785799476635) node {$b = 1$};
\draw [fill=black] (-4.629361624158912,4.16855919615874) circle (2.5pt);
\draw[color=black] (-4.4098744906687894,4.565324399006268) node {$c = -0.8$};
\draw [fill=black] (-3.379973325830522,3.52698142134146) circle (2.5pt);
\draw[color=black] (-3.160486192340399,3.923746624188989) node {$x0 = 2.8$};
\draw[color=qqttcc] (-0.7292440982959628,5.916014451253174) node {$h(r)$};
\draw [fill=uuuuuu] (0.8,0.) circle (1.5pt);
\draw [fill=uuuuuu] (0.,0.) circle (1.5pt);
\draw[color=uuuuuu] (0.30065706654230534,-0.2971597890825898) node {0};
\draw [fill=xdxdff] (5.838486280754627,0.) circle (2.5pt);
\draw [fill=uuuuuu] (0.,1.3952) circle (1.5pt);
\draw[color=uuuuuu] (0.2753316280626758,1.1632738299093763) node {$r_s$};
\end{scriptsize}
\end{tikzpicture}
\end{center}
We now show that the following:
\begin{proposition}
The trajectories of the Kepler problem in General relativity (\ref{KPGR}) are dually parallel to ($\alpha=\frac12$) elliptic spirals (\ref{ellspirals}).
\end{proposition}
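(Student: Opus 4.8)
The plan is to apply the dual parallel transform $E^\star_c$ of (\ref{translistp}) straight to the pedal equation (\ref{KPGRp}) and to check that, for one well-chosen value of the shift $c$, its image is literally an elliptic spiral (\ref{ellspirals}) with $\alpha=\tfrac12$. By (\ref{translistp}), $E^\star_c$ sends a curve $f\zav{\tfrac1{p^2},r}=0$ to $f\zav{\tfrac1{p^2}-\tfrac{2c}{r}+c^2,\tfrac{r}{1-cr}}=0$; abbreviating $v:=1/r$, it therefore replaces $\tfrac1{p^2}$ by $\tfrac1{p^2}-2cv+c^2$ and $v$ by $v-c$ (since $1/\zav{\tfrac{r}{1-cr}}=\tfrac1r-c=v-c$). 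Starting from (\ref{KPGRp}), which in these terms reads $\tfrac1{p^2}=r_sv^3+\tfrac{r_s}{a^2}v+d$, the transformed curve satisfies, after expanding $r_s(v-c)^3$,
$$
\frac1{p^2}=r_s v^3-3r_sc\,v^2+\zav{3r_sc^2+\frac{r_s}{a^2}+2c}v+\zav{d-c^2-\frac{r_sc}{a^2}-r_sc^3}.
$$

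The right-hand side is of the shape $\tfrac{A}{r^2}+\tfrac{\beta}{r}+\tfrac{\gamma}{r^3}=Av^2+\beta v+\gamma v^3$, which is exactly (\ref{ellspirals}) with $\alpha=\tfrac12$, precisely when its constant ($v^0$) term vanishes, i.e. when $c$ solves the cubic $r_sc^3+c^2+\tfrac{r_s}{a^2}c-d=0$. Being a real cubic with positive leading coefficient $r_s>0$, it has at least one real root $c_\ast$; for that value $E^\star_{c_\ast}$ carries the trajectory onto the $\alpha=\tfrac12$ elliptic spiral with $\gamma=r_s$, $A=-3r_sc_\ast$, $\beta=3r_sc_\ast^2+\tfrac{r_s}{a^2}+2c_\ast$ (here $A$ is the coefficient written $a$ in (\ref{ellspirals})). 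This is exactly the assertion that (\ref{KPGR}) is dually parallel to an $\alpha=\tfrac12$ elliptic spiral; if one prefers the reverse direction, $E^\star_c$ is invertible with inverse $E^\star_{-c}$, since $r\mapsto\tfrac{r}{1-cr}$ composes additively in $c$ (equivalently $E^\star_c=D_1E_cD_1$ and $E_cE_{c'}=E_{c+c'}$).

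I do not expect any real obstacle here: once the transform $E^\star_c$ is written out, the proof is a coefficient match under $v\mapsto v-c$, and the only thing to remark is that the single compatibility condition on $c$ is a real cubic and hence always solvable over $\mathbb{R}$. What remains is bookkeeping: the sign table following (\ref{ellspirals}) tells which sub-family ($sn$-, $sc$-, $sn^\star$-, $cn$-, $ds$- or $dn$-spiral) the resulting triple $(A,\beta,\gamma)$ falls into, according to the signs of $\beta,\gamma$ and the position of $(A-1)^2$ relative to $4\beta\gamma$. Since $\gamma=r_s>0$ never vanishes, the curve is always genuinely an $\alpha=\tfrac12$ spiral, the only possible degeneration being to a sinusoidal spiral in the special case $\beta=0$.
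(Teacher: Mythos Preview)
Your proof is correct and follows essentially the same route as the paper: apply $E^\star_c$ to (\ref{KPGRp}), expand to get a cubic in $v=1/r$, and kill the constant term by solving a real cubic in $c$, whose solvability is guaranteed since every real cubic has a real root. The coefficients you obtain ($\gamma=r_s$, $A=-3r_sc_\ast$, $\beta=3r_sc_\ast^2+r_s/a^2+2c_\ast$) agree exactly with the paper's $\tilde d,\tilde b$ (it uses $\gamma$ for your shift parameter $c$), and your extra remarks on invertibility of $E^\star_c$ and on the non-degeneracy $\gamma=r_s>0$ are welcome bookkeeping.
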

\begin{proof}
Starting with pedal equation for trajectories (\ref{KPGRp}) and applying dual parallel transform $E_\gamma^\star$ we get
$$
\frac{1}{p^2}=d+\frac{r_s}{a^2 r}+\frac{r_s}{r^3}\qquad \stackrel{E^\star_\gamma}{\longrightarrow}\qquad \frac{1}{p^2}=\tilde a+\frac{\tilde b}{r}+\frac{\tilde d}{r^2}+\frac{r_s}{r^3},
$$
where
\begin{align*}
\tilde a&:=d-\gamma\frac{r_s}{a^2}-\gamma^2-\gamma^3 r_s,\\
\tilde b&:=\frac{r_s}{a^2}+2\gamma+3r_s \gamma^2,\\
\tilde d&:=-3\gamma r_s.\\
\end{align*}
Choosing $\gamma$ such that $\tilde a=0$, which is always possible since a third degree algebraic equation has always a real solution, we see that the equation becomes the equation for elliptic spirals (\ref{ellspirals}) for the case $\alpha=\frac12$.
\end{proof}
\begin{remark}
By the same argument it can be shown that any curve of the form
$$
\frac{1}{p^2}=a+\frac{b}{r}+\frac{c}{r^2}+\frac{d}{r^3}, \qquad d\not=0,
$$
is dually parallel to ($\alpha=\frac12$) elliptic spirals.

\end{remark}
\subsection{Dark Kepler problem}\label{DKP}
Consider the dynamical system:
$$
\ddot x=-\frac{M}{\abs{x}^3}x+\frac{F}{\abs{x}}x-\omega^2 x,\qquad F,M\geq 0,
$$
which generalizes Kepler problem to include in addition to the gravitational effect of a central body (the $M$ term), also that of a homogeneous spherical bulk of dark matter around this central body (the $\omega^2$ term) and the dark energy -- i.e. constant outward repulsing force (the $F$ term).

In the pedal coordinates this takes form by Theorem \ref{cfth}:
\begin{equation}\label{dkpzadani}
\frac{L^2}{p^2}=\frac{2M}{r}+2Fr-\omega^2 r^2+c.
\end{equation}
Passing to the rotating frame of reference with angular velocity $\frac{\omega}{L} $ (using $A_{\frac{\omega}{L}}$ (\ref{Aomega})) this transforms to
$$
\frac{\zav{L+\omega r^2}^2}{p^2}=\frac{2M}{r}+2Fr+c+2\omega L.
$$
Comparing with the pedal form of Cartesian oval $\abs{x}+\alpha\abs{x-a}=C$ (See Example \ref{Covalex}): 
$$
\frac{\zav{b^2-(1-\alpha^2)r^2 }^2}{4p^2}=\frac{Cb^2}{r}+(1-\alpha^2)C r -\zav{(1-\alpha^2)C^2+b^2},
$$
where $b^2:=C^2-\alpha^2\abs{a}^2$.

We can see that these equations match if there exist a constant $\mu$ such that:
\begin{align*}
2L&=b^2\mu\\
2\omega &=- (1-\alpha^2)\mu\\
2M &=Cb^2 \mu^2\\
2F &=(1-\alpha^2)C \mu^2\\
c+2\omega L&=-\zav{(1-\alpha^2)C^2+b^2}\mu^2.\\
\end{align*}
This is a system of 5 algebraic equation in 4 unknowns $(\alpha,C,b^2,\mu)$, which cannot be, in general, satisfied if all the equations are independent. Hence, there has to be some connection between coordinates $(M,F,L,\omega,c)$. This connection is 
$$
FL+\omega M=0.
$$
Actually, it is enough that
$$
F^2 L^2=\omega^2 M^2,
$$
since the sign of $\omega$ (direction of rotation) can be chosen arbitrarily.

With this constrain the solution is
\begin{align*}
b^2&=\frac{2L}{\mu},\\
\alpha^2&=1-\frac{2\omega}{\mu},\\
C&=\frac{M}{L\mu},\\
2L\mu^2+(c+2\omega L)\mu+2\omega \frac{M^2}{L^2}&=0.
\end{align*}
The discriminant of the last equation is (using $\omega M=-FL$)
$$
D=(c+2\omega L)^2+4^2 FM\geq 0, 
$$
hence the solution exists and is given by
$$
\mu=\frac{\sqrt{D}-(c+2\omega L)}{4L}.
$$
We can see that $\frac{\mu}{L}>0$ and since $\omega <0$ (by agreement) the first three equations can be solved as well.

This solution assumes that $L\not=0, F\not=0$. For the special case $F=\omega=0$ (the usual Kepler problem) one gets singular solution with $\alpha=\pm 1$, i.e. an ellipse or a hyperbola. The solution of the case $L=0$ is a line segment with the origin as one its endpoints, thus $p=0$ in pedal coordinates. This can be seen by multiplying the original equation (\ref{dkpzadani}) by $p^2$ and letting $L\to 0$. 

Cartesian oval offers therefore a specific solution to the dark Kepler problem (in a suitable rotating frame of reference).

Furthermore, straightforward calculations shows that Dark Kepler problem is invariant under the transforms $E^\star_{\alpha} B_\alpha $ (\ref{Balpha}), specifically:
$$
\frac{L^2}{p^2}=\frac{2M}{r}+2Fr-\omega^2 r^2 +c\qquad \stackrel{E^\star_\alpha B_\alpha}{\longrightarrow} \qquad \frac{\tilde L^2}{p^2}=\frac{2\tilde M}{r}+2\tilde Fr-\tilde\omega^2 r^2 +\tilde c,
$$ 
where 
\begin{align*}
\tilde \omega^2&:=\omega^2+2F\alpha-\alpha^2 c+ 2\alpha^3 M +\alpha^4 L^2,\\
\tilde F&:=F-\alpha c+3\alpha^2M+2\alpha^3 L^2 & &=\frac12 \partial_\alpha \tilde \omega^2,\\
\tilde c&:= c-6\alpha M-6\alpha^2L^2& &=-\frac12\partial^2_\alpha \tilde \omega^2,\\
\tilde M&:=M+2\alpha L^2& &=\frac{1}{12}\partial_\alpha^3 \tilde \omega^2,\\
\tilde L^2&:=L^2& &=\frac{1}{24}\partial_\alpha^4 \tilde \omega^2.
\end{align*}

This means that starting with parameters $(L^2,2M,2F,\omega^2,c)$ it might be possible to transform them into parameters $(L^2,\tilde M,\tilde F,\tilde \omega^2, \tilde c)$ for which it holds
$$
\tilde F^2 L^2=\tilde\omega^2 \tilde M^2.
$$

Remarkably, this can \emph{always} be done. The equality above is an algebraic equation in $\alpha$ of the 6th order but (fortunately) all the coefficients of $\alpha^6,\alpha^5,\alpha^4$ are zero. Hence we are left with the equation of the third order:
$$
\left( -4\,F{L}^{4}-2\,cM{L}^{2}-2\,{M}^{3} \right) {\alpha}^{3}+
\left( -4\,{\omega}^{2}{L}^{4}+{c}^{2}{L}^{2}-2\,FM{L}^{2}+c{M}^{2}
\right) {\alpha}^{2}
$$
$$
+ \left( -4\,{\omega}^{2}{L}^{2}M-2\,cF{L}^{2}-2\,F{M}
^{2} \right) \alpha+ \left( FL-\omega M \right) \left( FL+\omega M \right)=0 ,
$$
which has always a real solution unless the leading term $2\,F{L}^{4}+\,cM{L}^{2}+\,{M}^{3}$ vanishes.

If the leading term vanishes the equation becomes
$$
(2\alpha L^2+M)^2 (F^2L^2-\omega^2 M^2)=0,
$$
which has evidently a real solution as well.

\end{document}